\colorlet{DarkRed}{red!50!black}
\colorlet{DarkGreen}{green!50!black}
\colorlet{DarkBlue}{blue!50!black}
\let\epsilon\varepsilon
\definecolor{orange}{RGB}{235,90,0}
\definecolor{darkorange}{RGB}{175,30,0}
\definecolor{turkis}{RGB}{131,182,182}
\definecolor{darkturkis}{RGB}{31,82,82}
\definecolor{green}{RGB}{102,180,0}
\definecolor{darkgreen}{RGB}{51,90,0}
\definecolor{myblue}{RGB}{0,0,213}
\definecolor{mydarkblue}{RGB}{0,0,100}
\definecolor{mybrightblue}{HTML}{74B0E4}
\definecolor{mybrighterblue}{HTML}{B3EAFA}
\definecolor{lila}{RGB}{102,0,102}
\definecolor{darkred}{RGB}{139,0,0}
\definecolor{darkyellow}{RGB}{188,135,2}
\definecolor{brightgray}{RGB}{200,200,200}
\definecolor{darkgray}{RGB}{50,50,50}
\definecolor{amaranth}{rgb}{0.9, 0.17, 0.31}
\definecolor{alizarin}{rgb}{0.82, 0.1, 0.26}
\definecolor{amber}{rgb}{1.0, 0.75, 0.0}
\definecolor{green(ryb)}{rgb}{0.4, 0.69, 0.2}
\definecolor{hanblue}{rgb}{0.27, 0.42, 0.81}
\definecolor{grannysmithapple}{rgb}{0.66, 0.89, 0.63}
\newtheorem{theorem}{Theorem}[section]
\newtheorem{observation}[theorem]{Observation}
\newtheorem{lemma-rstbl}[theorem]{Lemma}
\newtheorem{obs-rstbl}[theorem]{Observation}
\newtheorem{theorem-rstbl}[theorem]{Theorem}
\title{Hole Detection and Healing in Hybrid Sensor Networks}
\author{Mansoor Davoodi}
\author{Esmaeil Delfaraz}
\author{Sajjad Ghobadi}
\author{Mahtab Masoori}
\affil{Department of Computer Science and Information Technology, Institute for Advanced Studies in Basic Sciences, Zanjan, Iran

mdmonfared@iasbs.ac.ir, \{esmaieldelfaraz, s.ghobadi1370, mahtab.masoori\}@gmail.com
}
\date{}
\begin{document}
\maketitle
\begin{abstract}
	Although monitoring and covering are fundamental goals of a wireless sensor network (WSN), the accidental death of sensors or the running out of their energy would result in \textit{holes} in the WSN. Such holes have the potential to disrupt the primary functions of WSNs. This paper investigates the hole detection and healing problems in hybrid WSNs with non-identical sensor sensing ranges. In particular, we aim to propose centralized algorithms for detecting holes in a given region and maximizing the area covered by a WSN in the presence of environmental obstacles. To precisely identify the boundary of the holes, we use an additively weighted Voronoi diagram and a polynomial-time algorithm.Furthermore, since this problem is known to be computationally difficult, we propose a centralized greedy $\frac{1}{2}$-approximation algorithm to maximize the area covered by sensors. 
	Finally, we implement the algorithms and run simulations to show that our approximation algorithm efficiently covers the holes by moving the mobile sensors. 
\end{abstract}
\section{Introduction}
A Wireless Sensor Network (WSN) comprises a set of sensors that are deployed over a region of interest (RoI) with the capabilities of sensing and communication with each other~\cite{li2015coverage}. Each sensor has a limited resource with low power signal processing, computing, and connectivity abilities that can sense a restricted area and performs specific tasks such as target detection, target tracking, area monitoring, etc.~\cite{davoodi2019algorithms,ghosh2004estimating}. WSNs have been used in many military and civil applications, such as battlefield surveillance, intrusion detection, environmental protection, disaster recovery, traffic control, and air pollution monitoring~\cite{davoodi2019algorithms, kalwaghe2014literature, kang2013detection, li2015coverage}. One of the main challenges in WSNs is managing energy by optimizing sensors' deployments.  

The sensors’ energy is powered by a battery installed on them, and the presence of holes in a region may mean that 
some sensors run out of energy
in the network.
In several applications of WSNs, the location of sensors is determined beforehand, and sensor deployment can be performed manually. In contrast, in remote and hostile environments, a random deployment strategy is one possible solution. Due to random deployment, the distribution of the sensors is non-uniform. Consequently, the surveillance of the environment cannot be completely fulfilled~\cite{ghosh2004estimating}. It is clear that at least one alive sensor is required to cover each point within the environment in WSNs. Therefore, due to the accidental death of the sensors or running out of their energy, full coverage of an environment is difficult~\cite{li2015coverage}. Consequently, occurrence of holes within WNNs is inevitable~\cite{sahoo2014hora}. 


The presence of holes in a region degrades the performance of WSNs. It changes the network topology and decreases data reliability and communication among the sensors~\cite{corke2007finding}.
It is also more likely that the sensors that are located on the boundary of the holes will transfer more information to bypass the holes. As a result, the energy of the sensors will be depleted and leads in enlarging the hole areas (\textit{hole diffusion} problem)~\cite{li2011geographic}.
However, the detection of the boundary sensors of the holes helps in improving the performance of WSNs because 
(\textit{i}) it provides a healing approach to maximize coverage; and (\textit{ii}) it addresses to hole diffusion problem and obtains the boundary sensors of the holes. 

\paragraph{Our contribution.} 
In this study, we focus on computing and covering the holes in a WSN. In particular, we investigate the problems of hole detection and healing in a network of hybrid sensors with non-identical sensing ranges. First, we use the \textit{Additively Weighted Voronoi Diagram} (AWVD) to identify the holes in the absence of obstacles. We propose a centralized algorithm to report the boundary edges of the holes. The time complexity of the algorithm  is $O(n \cdot \log^2 n)$, where $n$ is the total number of sensors. We extend this strategy to the case where the environment consists of a set of obstacles, and propose an algorithm to detect the holes in $O(n (\log^2 n + l \cdot z))$ time, where $z$ and $l$ are the total numbers of vertices and the number of obstacles, respectively.
Furthermore, we investigate the problem of healing the discovered holes and maximizing the coverage area by moving the available sensors. As this problem is known to be NP-hard~\cite{wang2003portal}, we present a centralized greedy $\frac{1}{2}$-approximation algorithm to maximize the coverage area. Finally, we run simulate the algorithms and show that our hole detection algorithms identify the boundary points of the sensors exactly. Also, the proposed simple approximation algorithm for the hole healing problem covers the environment efficiently in practice.
 
We present the related work in Section~\ref{related work} and introduce the preliminaries and formal definition of the problem in Section~\ref{Problem Statement}. In Section~\ref{Hole Detection Algorithms}, we propose a centralized algorithm to detect the holes in the absence and presence of holes. 
In Section~\ref{Hole Healing}, we propose an approximation algorithm to cover the holes. In Section~\ref{sec : numerical tests}, we evaluate the performance of the presented algorithms by simulating them, and finally, we conclude in the last section.

\section{Related Work}\label{related work}
We review the literature related to hole detection and healing problems.

The first line of papers used the Voronoi diagram and triangulation that are two critical structures in computational geometry~\cite{aurenhammer1991voronoi, fortune1995voronoi}.
Wang et al.~\cite{wang2003portal} proposed a distributed bidding protocol in hybrid sensor networks. In this protocol, static sensors detect coverage holes and estimate their size using the Voronoi diagram and bid for mobile sensors to move towards the farthest Voronoi vertex. Also, they proved NP-hardness of coverage problem. Ghosh~\cite{ghosh2004estimating} and Wang et al.~\cite{wang2006movement} proposed Voronoi-based algorithms to calculate the exact amount of coverage holes under random deployment. Ghosh~\cite{ghosh2004estimating} studied hole detection and healing problems in a hybrid network in which each mobile sensor has the same sensing range. In this study, after estimating the number of coverage holes, the aim is to calculate the number of additional mobile sensors and their locations to maximize the environment's coverage. Wang et al.~\cite{wang2006movement} presented three-hole healing distributed algorithms VEC, VOR, and Minimax. VEC pushes sensors away from each other to create maximum coverage in each Voronoi cell. VOR, which is a greedy algorithm, moves a sensor towards its farthest Voronoi vertex. Minimax selects a point as a Minimax point, i.e., the point whose distance from the farthest Voronoi vertex is minimal and moves the sensors. Babaie and Pirahesh~\cite{babaie2012hole} used triangulation to identify coverage holes and estimate their size. They also determined the required number of mobile sensors and their target location heal the holes. Li and Zhang~\cite{li2015coverage} proposed a distributed algorithm to evaluate whether coverage holes exist in WSNs. They used Delaunay triangulation to indicate the topology of a network and identify the hole boundary sensors. Zhang et al.~\cite{zhang2013energy} proposed a centralized algorithm for hole detection and healing problems in a hybrid sensor network. They used Delaunay triangulation of the static sensors to determine holes among each triangle. In the hole healing algorithm, they identified the place where the mobile sensors should cover first. After moving one mobile sensor, the algorithm updates the triangulation and iterates the process until the coverage percentage reaches a certain threshold. Wu et al.~\cite{wu2007delaunay} presented a centralized method to maximize coverage in a given environment with obstacles. They assumed a predefined number of sensors with a probabilistic sensor detection model. First, they deployed new sensors to eliminate coverage holes around the boundary of the environment and obstacles. Then, they applied a deterministic sensor deployment method based on Delaunay triangulation and improved the coverage of the environment. The time complexity of their algorithm is $O(n^2 \cdot \log n)$, where $n$ is the number of sensors. Alablani and Alenazi~\cite{alablani2020edtd} proposed a sensor deployment method, called the Evaluated Delaunay Triangulation-based Deployment for Smart Cities, in the presence of obstacles. The proposed method was designed for IoT smart cities that use Delaunay triangulation and k-means clustering to determine the sensor locations to improve the coverage area. Ghahroudi et al.~\cite{ghahroudi2020voronoi} designed a distributed Voronoi-based Cooperative Node Deployment algorithm. In this algorithm, first, each node constructs its Voronoi cell. Subsequently, the covered and hole areas are identified in each Voronoi cell. Finally, mobile sensors move to a proper location within their Voronoi cells to increase the coverage area. Koriem and Bayoumi~\cite{koriem2018detecting} proposed a hole detection algorithm to calculate the hole area and minimize the energy consumption. First, the algorithm partitions the area into a set of grid cells. Then, it identifies the nearest nodes to each cell by comparing the cell's position with each node laying in that cell. Finally, it computes the hole area by using the triangulation method.

We proceed by reviewing more works. Fekete et al.~\cite{fekete2004neighborhood} proposed a boundary detection algorithm based on a statistical approach. Their idea is that the sensors on the holes' boundary have fewer average degrees than the sensors are within the network. Their algorithm has complex computations and needs high sensor density. Ghrist and Muhammad~\cite{ghrist2005coverage} and Silva et al.~\cite{de2005blind} studied the hole detection problem using homology theory and algebraic method. They proposed centralized coordinate-free algorithms based on the sensors' connectivity information to detect the boundary of the holes. The algorithm's time complexity is $O(n^5)$~\cite{ghrist2005coverage}. Unfortunately, it does not provide any guarantee to detect the hole boundary accurately. The time complexity of the proposed algorithm in~\cite{de2005blind} is $O(q \cdot v^2)$, where $v$ is the maximum number of active sensors that overlap a sensor's sensing area and $q$ is the maximum number of redundant sensors in a hole. Their method does not perform well in dense networks. 

Kang et al.~\cite{kang2013detection} proposed a decentralized sensor-based algorithm that obtains the holes' boundary points. The hole healing algorithm is based on the perpendicular bisector of the boundary points. The healing sensors are deployed on hole boundary bisectors to create full coverage in the environment. The time complexity of the algorithm is $O(a \cdot f \cdot n)$, where $f$ is the maximum number of sensors that are neighbouring to a sensor ($f < n$), and $a$ is the maximum number of boundary points in the network. Sahoo et al.~\cite{sahoo2010vector} proposed a distributed hole detection algorithm that finds the boundary of uncovered regions. They designed a hole recovery $O(n^3)$ time algorithm to identify the mobile sensors' magnitude and direction using a vector method.

 Ma et al.~\cite{ma2011computational} proposed a distributed $O(f \cdot n)$ time algorithm such that each sensor detects the holes based on 1-hop and 2-hop neighbours. Kun et al.~\cite{bi2006topological} proposed a distributed hole detection algorithm using the communication graph of the sensors. Each sensor is identified as a hole boundary by comparing its degree and the average degree of its 2-hop neighbours. The algorithm cannot precisely detect all boundary sensors. Funke~\cite{funke2005topological} presented a distributed coordinate-free algorithm to identify holes based on the communication topology of the sensors. The time complexity of the algorithm is $O(n^5)$. Funke and Klein~\cite{funke2006hole} described a graph-based algorithm to identify the boundary sensors using the unit disk graph model. The algorithm requires a high sensor density.

Kalwagh and Dusane~\cite{kalwaghe2016hole} and Senouci et al.~\cite{mellouk2013localized} proposed distributed algorithms to identify and heal the holes in hybrid and mobile sensor networks, respectively. In these studies, the sensing range of the sensors is identical, and there is no obstacle in the environment. They detected the boundary sensors of the holes by using the TENT rule. The hole healing algorithm is based on virtual forces that move the sensors to locate at a suitable distance from the hole. The time complexity of these algorithms is $O(n^2)$. Also, the proposed algorithm in~\cite{mellouk2013localized} cannot detect the holes on the boundary of the network.

Zou and Chakrabarty~\cite{zou2003sensor} proposed a virtual force algorithm to enhance the coverage of an environment in the presence of obstacles.
The algorithm uses a combination of attractive and repulsive forces between the sensors and obstacles to identify the sensors' final position that guarantees maximum coverage. Chang and Wang ~\cite{4357431} proposed a method to rearrange sensor nodes to maximize the area covered by the sensors in an environment containing a set of obstacles. In this method, the sensors' and obstacles' density is calculated, and a sensor moves to a position with low density.

Priyadarshi and Gupta1~\cite{priyadarshi2020coverage} addressed the hole detection problem by identifying the coverage hole in the monitoring region and moving sensors toward the holes to increase the coverage. Their algorithm is a heuristic method without any guarantee on the coverage area obtained after moving the sensors. El Khamlichi et al.~\cite{el2018recovery} proposed a coverage hole detection and recovery algorithm that detects the coverage holes and redundant nodes based on the distance between the nodes. Then heals the coverage holes using a gradient algorithm by moving the redundant nodes. Khalifa et al. ~\cite{khalifa2020distributed} presented a distributed hole detection and repair algorithm using mobile sensors with two phases. First, they detect the hole by estimating the size and
position of the existing coverage holes. Second, the hole repair is applied to restore coverage holes using the distance and movement direction of nodes selected for the hole repair phase. Wu~\cite{wu2020efficient} proposed a hole recovery method, called Neighbor node Location-based Coverage Hole Recovery, that adds mobile sensor nodes at an optimal location using the connectivity of 1-hop neighbour nodes to recover coverage holes.

Khedr et al.~\cite{khedr2020coverage} designed a coverage-aware and planar face topology structure (CAFT). 
CAFT partitions the nodes into groups, called faces, using the minimum number of nodes to achieve maximum coverage and connectivity while putting the other nodes in sleep mode. They also proposed a target tracking algorithm that runs on the CAFT and evaluates its performance. Hu et al.~\cite{hu2020coverage} studied hole detection and healing problems in the presence of mixed and mobile sensors. They calculated the location of nodes and obtained the coverage holes using the continuous
maximum flow method. Further, they proposed a healing method that moves mobile sensors to new locations to improve the coverage using the harmony search algorithm.

Singh and Chen~\cite{singh2020sensing} designed a chord-based coverage hole identification method to identify both open and closed holes by determining the hole boundary nodes. The complexity of the method is $O(n^3 f^3 + \rho \theta^2)$, where $\rho$ is the number of identified cycles from all the hole chords, and $\theta$ being the
maximum number of hole vertices of each cycle. They also presented a covering method that completely covers the network using the minimum number of nodes. The covering method's complexity is $O(1.44^{\lambda} + \lambda^2 d^2)$, where $\lambda$ is the number of hole boundary nodes and $d$ is the maximum number of nodes that are
neighbours of the hole boundary nodes. 

As far as the authors know, the problem of hole detection and healing for a set of sensors with different sensing ranges in an environment with obstacles has not been studied yet. On the other hand, as aforementioned, most of the hole detection and healing algorithms in the literature have considered a network of identical sensors. Also, they have considered an environment without obstacles. The drawback of most of these algorithms is that they cannot report the boundary of holes accurately. Further, most of the algorithms have high complexity time and are not efficient in a sensor network with different sensing ranges.

\section{Problem Statement} \label{Problem Statement}
We consider a wireless sensor network including static and mobile sensors with different sensing ranges deploying randomly over an environment with obstacles.
Following Tan et al.~\cite{tan2013approximate}, we suppose that the environment contains transparent obstacles, where the sensors do not lie on them, but their sensing signals can penetrate through them. The environment and obstacles are modelled as polygons, and the boundary information of the obstacles and environment, coordinates of the sensors and their sensing ranges are given in advance. Formally, let $O = \{o_1, \ldots, o_l\}$ be a set of obstacles, $SS =\{ss_1, \ldots, ss_t\}$ be a set of static sensors (i.e., we do not or cannot move them from their initial positions) and $MS = \{ms_1, \ldots, ms_m\}$ be a set of mobile sensors (i.e., we can move them from one position to another one, without colliding with the environment) in an environment. The goal is to detect the set of holes $H = \{H_1, \ldots, H_{|H|}\}$, where $|H|$ is the number of holes and $H_i \in H$, for $i=1,2,\dots,|H|$, represents an area (polygon) where no sensor covers it. We call this problem \textit{the hole detection problem}. Also, for the sake of well-definition, we assume $H \cap O= \emptyset$. 

To represent the holes, we utilize a set of boundary points as $B = \{h_1^1, \ldots, h_1^{|h_1|}, \ldots,h_{|H|}^1, \ldots, h_{|H|}^{|h_{|H|}|} \}$, where $|h_i|$ is the number of intersection (boundary) points that makes the hole $H_i$. The intersection points of sensor $s_i$ with (\textit{i}) its neighbor sensors, (\textit{ii}) obstacles, (\textit{iii}) and the boundary of the environment that are not covered by other sensors are called \textit{boundary points} of $s_i$. We compute the boundary edges $H_i = \{e_{i1}, \ldots, e_{i|h_i|}\}$, for $i = 1, \ldots, |H|$ in which $e_{ij}$ ($j = 1, \ldots, |h_i|$) indicates the edges of $H_i$. After detecting the holes, we solve \textit{the healing problem}, that is, maximizing the covered area by moving the mobile sensors. 
\section{Hole Detection}\label{Hole Detection Algorithms}
    In this section, the goal is to present centralized algorithms in order to identify the boundary points of the existing holes in a given environment.
	First, we consider the case that the environment does not contain any obstacle.
	Then, we extend the results to the case that the environment includes some obstacles.

\subsection{Hole Detection in an Environment without Obstacle}
	Let $S =\{s_1, \ldots, s_n\}=SS \cup MS$ be the set of all sensors including both static and mobile sensors, where $s_i = (c_i, r_i)$ is a disk centered at $c_i$ with radius $r_i$. To find the boundary points of the holes, we use the additively weighted Voronoi diagram. Recall that, given $n$ points (here they are sensors) in the plane (or environment), AWVD partitions the environment into $n$ Voronoi cells (one for each sensor) with the property that a point $p$ lies in the Voronoi cell corresponding to $s_i$ if and only if $d(c_i, p) - r_i \le d(c_j, p) - r_j$ for any $s_j \in S$, where $d(\cdot, \cdot)$ is the distance function. The Voronoi edges of AWVD can be line segments or arcs of hyperbolas~\cite{ash1986generalized} (see Figure~\ref{awvd}). 	Accordingly, we have the following observations.
	
		\begin{figure}[H]
				\centering
				\includegraphics[width=6cm]{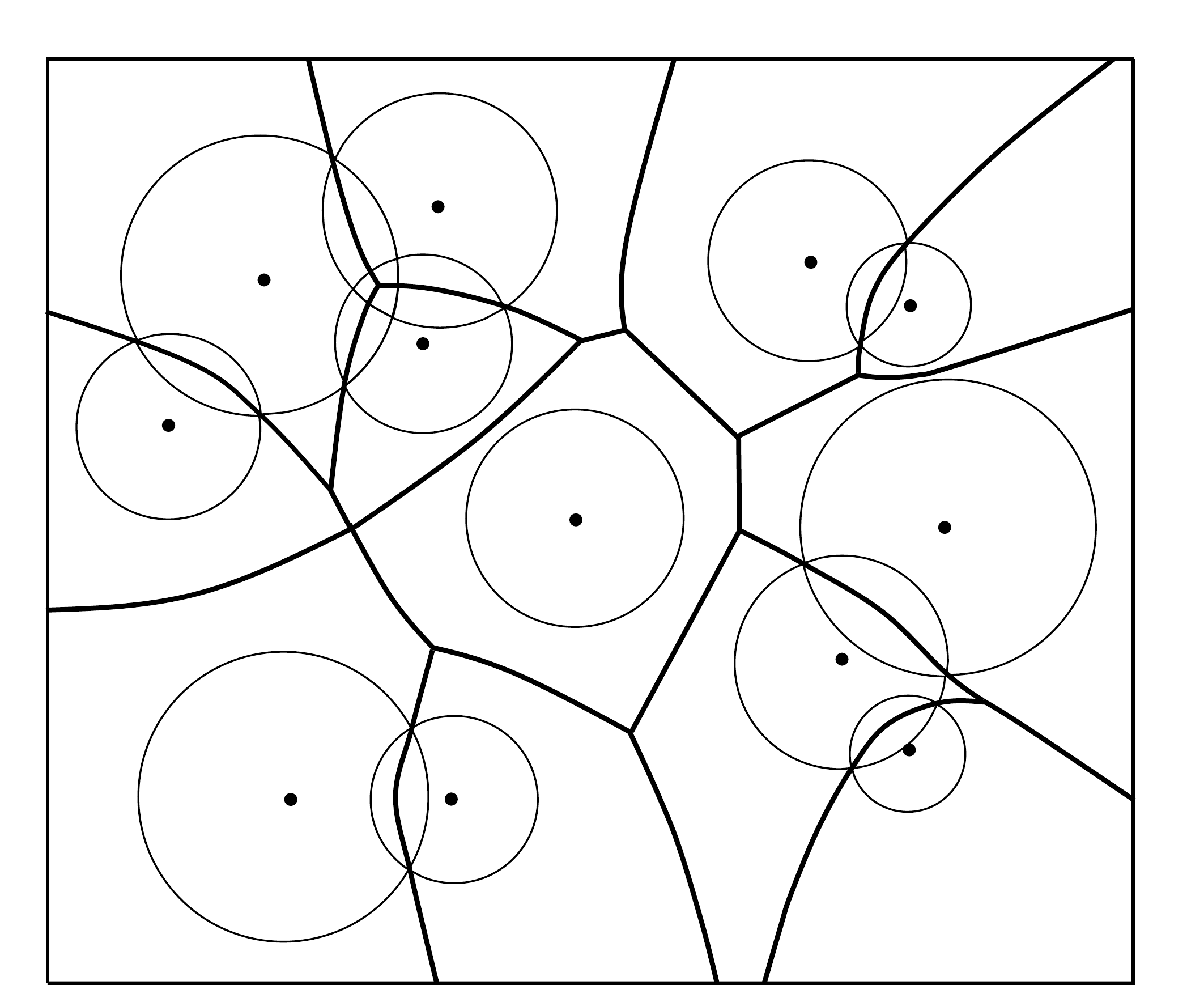}
				\caption{The additively weighted Voronoi diagram for a set of sensors.}\label{awvd}
		\end{figure}

	\begin{observation} \label{obs1}
		The intersection points of any two distinct sensors $s_i$ and $s_j$ lie on AWVD edges.
	\end{observation}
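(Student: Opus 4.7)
The plan is to unpack the definition of the additively weighted Voronoi diagram and verify the bisector condition directly at each intersection point. Writing $f_k(p) := d(c_k, p) - r_k$ for the weighted distance function, the AWVD cell of $s_k$ is $V_k = \{p : f_k(p) \le f_\ell(p) \text{ for every } \ell\}$, and the edge shared by $V_i$ and $V_j$ is the subset of the plane where $f_i(p) = f_j(p)$ and this common value is minimum over all sensors.

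First I would observe that if $p$ lies on the boundaries of both disks $s_i$ and $s_j$, then by definition $d(c_i, p) = r_i$ and $d(c_j, p) = r_j$, so $f_i(p) = 0 = f_j(p)$. This already places $p$ on the additively weighted bisector of $s_i$ and $s_j$, which is precisely the hyperbolic arc (or the line segment when $r_i = r_j$) mentioned in the paragraph preceding the observation.

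Next I would upgrade ``on the bisector'' to ``on an AWVD edge''. For this, the condition to check is $f_\ell(p) \ge 0$ for every other sensor $s_\ell$, or equivalently that $p$ is not contained in the interior of any third disk. In the hole-detection setting this is automatic for the intersection points that matter: they lie on the boundary of an uncovered region, so no third disk can cover them. Given this, the chain $0 = f_i(p) = f_j(p) \le f_\ell(p)$ exhibits $p$ as a point on the Voronoi edge shared by $V_i$ and $V_j$, which is the desired conclusion.

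The only subtlety, which I expect to be the main (and fairly minor) obstacle, is that the observation as phrased is really a statement about the intersection points relevant to the algorithm, namely those not interior to any other disk; for these the proof collapses to the one-line computation $f_i(p) = f_j(p) = 0$ together with the non-negativity $f_\ell(p) \ge 0$ forced by non-coverage, so I would make that implicit qualification explicit when writing up the argument.
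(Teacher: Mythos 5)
Your argument is correct, and it follows the only natural route: the paper itself offers no proof of this observation (it is asserted as immediate from the definition of the AWVD), and your computation $f_i(p)=f_j(p)=0\le f_\ell(p)$ is exactly the justification the authors leave implicit. Your flagged subtlety is real and worth keeping: as literally stated the observation fails for an intersection point interior to a third disk (the paper concedes this in Observation~\ref{obs2}), so the qualification that $p$ is not covered by any other sensor is genuinely needed, and your write-up is in that respect more careful than the source.
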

	\begin{observation}\label{obs2}
		Let $b_{ij}$ be the set of intersection points of two sensors $s_i$ and $s_j$. If some intersection point $p \in b_{ij}$ is covered by a sensor $s_k \neq s_i \neq s_j$, then $p$ does not lie on the edges of AWVD (see Figure~\ref{fig7}). If there is a point $p' \in b_{ij}$ that lies on the boundary of the other sensors, then $p'$ is the Voronoi vertex based on the AWVD's properties (see Figure~\ref{fig8}).
	\end{observation}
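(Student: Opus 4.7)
The plan is to reduce both claims to sign tests on the weighted-distance function $\phi_k(p) := d(c_k,p) - r_k$ that underlies the AWVD. The pivotal observation to set up first is that for any $p \in b_{ij}$ one has $\phi_i(p) = \phi_j(p) = 0$, simply because $p$ sits on the bounding circles of $s_i$ and $s_j$. This ``zero pivot'' converts the geometric hypotheses (being covered by $s_k$, or lying on the boundary circle of $s_k$) into arithmetic comparisons with $\phi_i(p)$ and $\phi_j(p)$, after which the AWVD characterization recalled just before the observation immediately yields the desired cell membership.

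For the first claim I would argue as follows: if $p$ lies in the open disk $s_k$ then $d(c_k,p) < r_k$, hence $\phi_k(p) < 0 = \phi_i(p) = \phi_j(p)$. By the defining inequality of AWVD, $p$ lies strictly in the interior of the Voronoi cell of $s_k$, so in particular $p$ cannot lie on any AWVD edge. For the second claim I would argue that if $p' \in b_{ij}$ also lies on the bounding circle of some third sensor $s_k$, then $\phi_k(p') = 0 = \phi_i(p') = \phi_j(p')$, so at least three sensors simultaneously attain the minimum of $\phi$ at $p'$, which is exactly the AWVD definition of a Voronoi vertex. Both parts are then direct unwindings of the characterization stated earlier in the section; no additional geometric machinery (beyond the fact that AWVD edges may be hyperbolic arcs, which plays no role here) is needed.

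The main, and only mild, obstacle I anticipate is the degenerate case where ``covered by $s_k$'' is read in the closed sense and $p$ happens to lie exactly on $\partial s_k$: in that case $\phi_k(p) = 0$ rather than $\phi_k(p) < 0$, so $p$ is not pushed off the edge but instead becomes a Voronoi vertex. This is precisely the content of the second part of the observation, so the two statements fit together as complementary cases rather than as a gap. I would therefore phrase the two halves of the proof so that the dichotomy $\phi_k(p) < 0$ versus $\phi_k(p) = 0$ is made explicit, which simultaneously handles the boundary situation and motivates the appeal to Figures~\ref{fig7} and~\ref{fig8} referenced in the statement.
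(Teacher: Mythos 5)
Your proof is correct and follows exactly the route the paper intends: the observation is stated without an explicit proof, resting on the defining inequality $d(c_i,p)-r_i \le d(c_j,p)-r_j$ recalled just before it, and your sign analysis of $\phi_k(p)=d(c_k,p)-r_k$ with the zero pivot $\phi_i(p)=\phi_j(p)=0$ is precisely the formalization of that argument, including the clean dichotomy $\phi_k(p)<0$ versus $\phi_k(p)=0$ separating the two halves. One small caution: $\phi_k(p)<0=\phi_i(p)=\phi_j(p)$ shows that $p$ lies outside the closed cells of $s_i$ and $s_j$ and hence off the Voronoi edge separating them (which is all the paper needs), but it does not by itself place $p$ strictly inside the cell of $s_k$, since a fourth sensor could have an even smaller weighted distance; likewise, for the second half you should make explicit that a hole-boundary candidate $p'$ is not strictly covered by any sensor, so $0$ is indeed the minimum of $\phi$ at $p'$ and the three cells of $s_i$, $s_j$, $s_k$ genuinely meet there, making $p'$ a Voronoi vertex.
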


\begin{figure}[H]
	\centering
	\begin{subfigure}[b]{.5\linewidth}
		\centering
		\includegraphics[scale=.5]{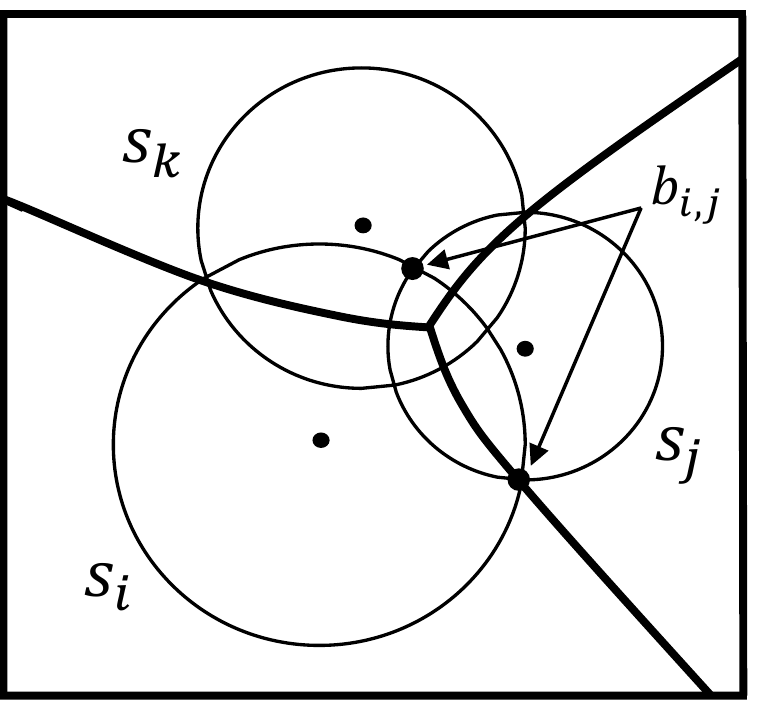}
		\caption{}
		\label{fig7}
	\end{subfigure}
	\begin{subfigure}[b]{.45\linewidth}    
		\centering
		\includegraphics[scale=.52]{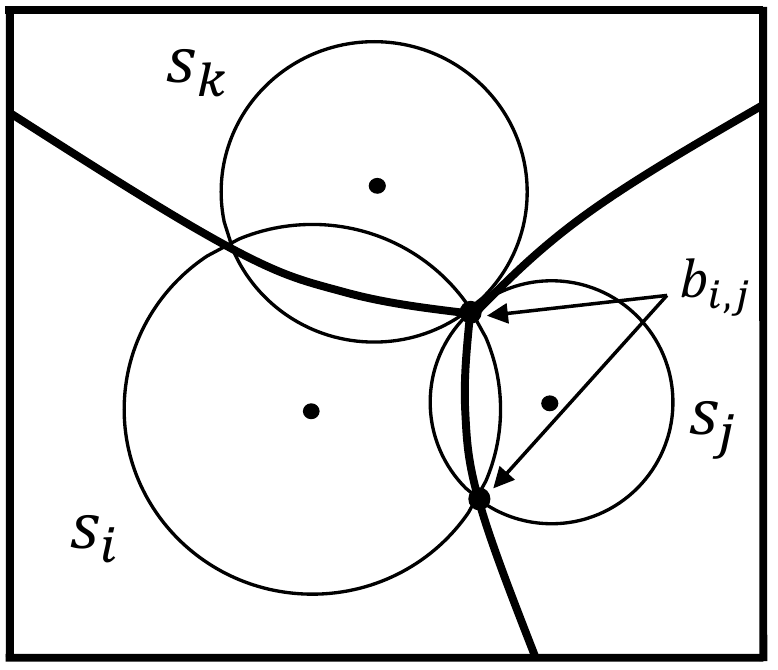}
		\caption{}
		\label{fig8}
	\end{subfigure}
	\caption{(a) Sensor $s_k$ covers at least one of the intersection points of the sensors $s_i$ and $s_j$, and (b) there is at least a point in $b_{ij}$ that lies on the boundary of the sensor $s_k$.} 
	\label{fig9} 
\end{figure}
Observations~\ref{obs1} and~\ref{obs2} imply that the intersection points of sensors that are not covered by other sensors either lie on the Voronoi edges or are a subset of Voronoi vertices. Indeed, such intersection points are known as the hole boundary points. 
	
\begin{observation}
    Let $S$ be the set of $n$ sensors in the environment. The total number of boundary points of the holes is $O(n)$.
\end{observation}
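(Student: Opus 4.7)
The plan is to derive the bound from the well-known linear complexity of the additively weighted Voronoi diagram, using Observations~\ref{obs1} and~\ref{obs2} as the bridge from boundary points to AWVD edges. Specifically, it is a classical fact that the AWVD of $n$ sites in the plane has $O(n)$ combinatorial complexity, i.e.\ $O(n)$ vertices, edges, and faces (this can be invoked by citing \cite{ash1986generalized}). So if I can charge each hole boundary point to a distinct constant-size piece of the AWVD, the $O(n)$ bound follows.

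The key observation is that for any pair of sensors $s_i,s_j$, their bounding circles intersect in at most two points. By Observation~\ref{obs1}, every boundary point lies on an AWVD edge, and by Observation~\ref{obs2}, those intersection points of $s_i$ and $s_j$ that are \emph{not} covered by some other sensor must lie on the specific AWVD edge shared by the Voronoi cells of $s_i$ and $s_j$ (or they degenerate into Voronoi vertices, which are also $O(n)$ in number). Hence each edge of the AWVD carries at most two hole boundary points: the (at most two) intersection points of the two circles whose cells it separates.

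Summing over all edges of the AWVD then gives a total of at most $2 \cdot O(n) = O(n)$ boundary points, which is the desired bound. The only subtlety I anticipate is handling degenerate cases cleanly, namely intersection points that happen to lie on three or more sensor boundaries simultaneously; these coincide with AWVD vertices by Observation~\ref{obs2}, and since there are $O(n)$ such vertices they contribute only an additive $O(n)$ term and do not affect the asymptotic count. No separate treatment of the environment boundary is needed in this subsection, since the statement concerns only the holes arising from sensor coverage.
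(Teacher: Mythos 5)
Your proposal is correct and follows essentially the same route as the paper: both arguments invoke the $O(n)$ combinatorial complexity of the AWVD and charge at most two hole boundary points to each Voronoi edge (with degenerate multi-sensor intersection points absorbed into the $O(n)$ Voronoi vertices via Observation~\ref{obs2}). Your version merely spells out the per-edge charging and the degenerate case a bit more explicitly.
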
 \label{theorem1}
\begin{proof}
    Observe that the AWVD of $n$ sensors contains $O(n)$ vertices and edges~\cite{aurenhammer1991voronoi}. Based on Observations~\ref{obs1} and~\ref{obs2}, the number of boundary points that lie on every Voronoi edge is at most 2, then the number of boundary points of the holes is $O(n)$.
\end{proof}
	Recall that there are two types of holes in the environment: (i) \textit{close hole}: the boundaries of the holes that are formed by the sensing area of the sensors, and (ii) \textit{open hole}: the boundary of the holes that are formed by the sensing area of the sensors and boundary of the environment or obstacles.
	
	Using Observations~\ref{obs1} and~\ref{obs2}, AWVD enables us to detect all of the boundary points, represented by $B$, precisely. To do so, we develop an incremental algorithm that constructs the boundary edges of each hole using the set $B$ in two steps. We term this algorithm HDAO stands for Hole Detection in the Absence of Obstacles. 
	First, it sorts the boundary points of each sensor in clockwise order and computes the holes' boundary edges.
	Second, for each hole $H_i$, HDAO outputs a set of boundary edges constructing that hole. We first specify the details of the first step and then proceed to analyze the second step.
	
	\paragraph{First step.} Let $s_i$ be the $i$th sensor and $B_i = \{h_{i1}, \ldots, h_{iu}\}$, $u \le n$, be the set of boundary points that lie on the boundary of $s_i$ which is sorted in clockwise order. Here the objective is to start from $h_{i1}$ and march along $s_i$ until reaching $h_{i1}$ again, and reporting the boundary edges of the holes.
	More precisely, consider two consecutive points $h_{ij}$ and $h_{i(j+1)} \in B_i$ that are created by the intersection of $s_i$ with $s_k$ and $s_v$, respectively. If the line segment $h_{ij} h_{i(j+1)}$ is not contained in $s_i \cap s_k$ and $s_i \cap s_v$, the algorithm considers the line segment $h_{ij} h_{i(j+1)}$ as a boundary edge of holes. HDAO repeats this process for all the sensors to determine all the boundary edges of holes.
	
    \paragraph{Second step.} Now, let $E$ be the set of all the boundary edges of holes. First, the algorithm sorts $E$ according to the $x$-coordinate of the starting point of the edges.
    Then, HDAO starts from an arbitrary edge $e_{ij}$ and selects edge $e_{i(j+1)}$ in clockwise order such that $e_{i(j+1)}$ does not lie on the boundary of the sensor that edge $e_{ij}$ lies on it, also, $e_{ij}$ and $e_{i(j+1)}$ shares an endpoint (see Figure~\ref{hole}). The algorithm iterates this process until it reaches the starting point of the first edge and visits all the edges. Note that to calculate the holes' area, HDAO follows the method proposed by Babaie and Pirahesh~\cite{babaie2012hole}.

	\begin{figure}[H]
		\centering
		\includegraphics[width=6cm]{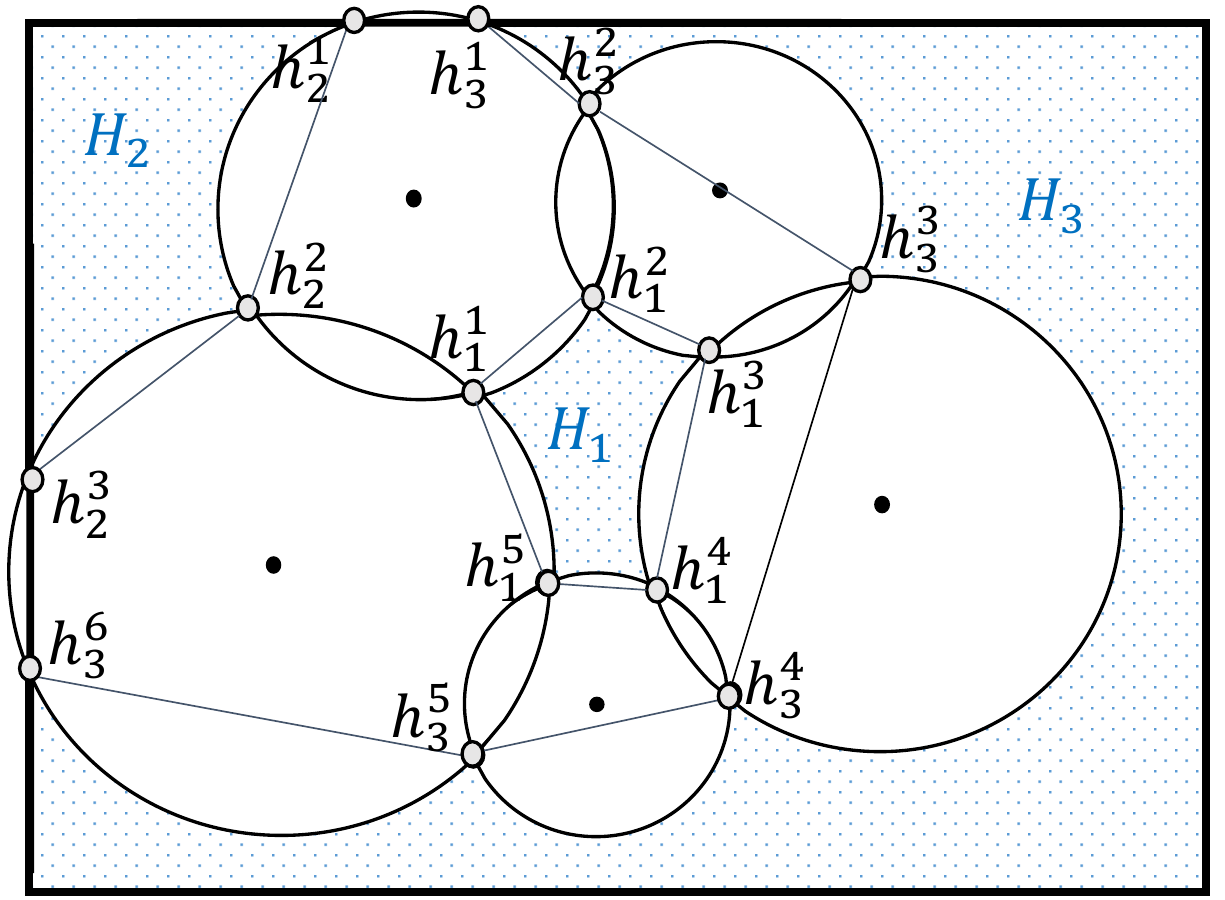}
		\caption{$H_1$ is a close hole, $H_2$ and $H_3$ are open holes, and $h_i^j$ denotes the $j$th boundary point of $i$th hole.}\label{hole}
	\end{figure}
	
	Procedure details of the algorithm are given below.
	
	\begin{algorithm}
		\caption{HDAO(S, W)}
		\label{Forward}
		\hspace*{\algorithmicindent} \textbf{Input:} An environment $W$ containing a set of sensors $S$.\\
		\hspace*{\algorithmicindent} \textbf{Output:} $H=\{H_1, \ldots, H_{|H|}\}$ the set of holes in $W$.
		\begin{algorithmic}[1]\baselineskip=14pt\relax
			\STATE Construct AWVD of $S$.
			\FOR {each $s_i \in S$}
			  \STATE Sort the boundary points of $s_i$ in clockwise order and obtain the boundary edges.
			\ENDFOR
			\STATE Sort the boundary edges based on their $x$-coordinates and obtain the boundary edges of each hole.
			\STATE Triangulate the holes based on the boundary points and examine the area of each coverage hole. 
			
		\end{algorithmic}\label{HDAO}
	\end{algorithm}

	 We move to the complexity analysis. HDAO takes $O(n \cdot \log^2 n)$ time for constructing AWVD in line 1~\cite{sharir1985intersection}. The calculations of line 2 to line 4 take $O(n \cdot \log n)$ time. Therefore, the proposed algorithm's total time complexity is $O(n \cdot \log^2 n)$ time.
	
\subsection{Hole Detection with Obstacles}
	In this subsection, we extend the presented HDAO for a given environment containing a set of obstacles $O = \{o_1, \ldots, o_l\}$. We assume that there is no intersection between the obstacles. As a warm-up, we first give an observation that provides a lower bound on the complexity of the hole detection problem in the presence of obstacles.
	\begin{observation}
		A trivial lower bound on the time complexity of hole detection problem in an environment containing a set of obstacles with $z$ vertices and $n$ sensors is $\Omega(n \cdot z)$.
	\end{observation}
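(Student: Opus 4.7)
The plan is to prove the lower bound by an output-size argument: I will exhibit an instance on which the set of hole boundary points has cardinality $\Omega(nz)$, so any correct algorithm must spend at least $\Omega(nz)$ time merely writing down the output.

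To construct the instance, let $z = 4l$ and take $l$ thin, pairwise disjoint axis-aligned rectangular obstacles lined up inside a narrow horizontal strip of the environment; each obstacle has exactly $4$ vertices. Place the $n$ sensors at well-separated vertical positions so that their sensing disks intersect the strip in $n$ mutually disjoint horizontal bands. By choosing the radii appropriately, each sensor's bounding circle can be arranged to cross every one of the $l$ obstacles transversally, so each such crossing contributes $\Theta(1)$ intersection points between that sensor's disk boundary and the obstacle's boundary, giving $\Theta(l)$ intersections per sensor.

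The vertical separation guarantees that no two sensor disks overlap inside the strip, so none of these intersection points is covered by any other sensor. Each sensor--obstacle intersection is therefore a genuine hole boundary point (it lies on the boundary of a sensor, on the boundary of an obstacle, and is not covered by any other sensor), and so must appear in the output. Summing over all $n$ sensors yields $\Omega(n \cdot l) = \Omega(n \cdot z)$ required boundary points, hence an $\Omega(n \cdot z)$ lower bound on the running time of any correct algorithm.

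The only step that requires care is verifying that such a configuration is geometrically realizable with all of these properties simultaneously---valid sensor placements, the required transversal crossings, disjoint strip-bands, and no accidental coverage. This is straightforward via explicit coordinates, which is why the observation is labelled trivial; the substance of the argument is simply that one can independently force $\Theta(l)$ uncovered hole boundary points per sensor.
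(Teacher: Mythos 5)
Your proof is correct and takes essentially the same route as the paper: an output-size lower bound built from a worst-case configuration in which every sensor's boundary crosses every obstacle, forcing $\Omega(n\cdot z)$ uncovered intersection points that any correct algorithm must report. The only difference is that you spell out an explicit construction and verify non-coverage of the intersection points, whereas the paper asserts the same configuration by reference to a figure.
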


\begin{proof}
    
	Consider Figure~\ref{lower_bound}, each edge of an arbitrary obstacle $o_i$ intersects all $n$ sensors in the worst-case. Based on AWVD, the number of the sensors' boundary points is $O(n)$ and the number of vertices of the obstacles is $z$, so we need to report $\Omega(n \cdot z)$ boundary points to identify the boundaries of holes. As a result, the time complexity of any algorithm for hole detection problem in the presence of obstacles is at least $\Omega(n \cdot z)$.
\end{proof}

	\begin{figure}[H]
		\centering
		\includegraphics[width=8cm]{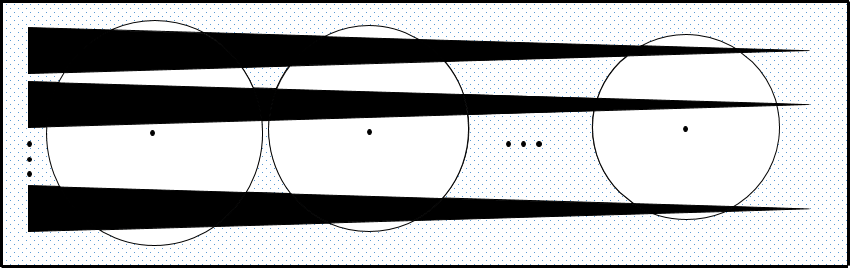}
		\caption{A lower bound for hole detection problem.}\label{lower_bound}
	\end{figure}

In what follows, we propose a centralized algorithm to detect the holes when the given environment contains some obstacles. Our centralized algorithm for hole detection and boundary points recognition includes two steps: (1) ignoring the obstacles: 
in this step, we suppose that there are no obstacles in the environment and compute the boundary of each hole using Algorithm~\ref{HDAO}; (2) obstacles insertion: this step is designed to update the holes simply by inserting the obstacles one by one (see Figure~\ref{hole_obstacle}). We call this algorithm HDPO, which stands for Hole Detection in the Presence of Obstacles.
	
	\begin{figure}[H]
		\centering
		\includegraphics[width=6cm]{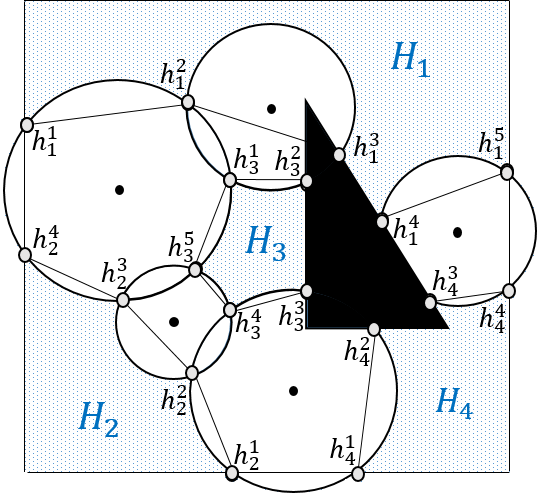}
		\caption{$H_1$, $H_2$, and $H_4$ are open holes, $H_3$ is a close hole, and $h_i^j$ is the $j$th boundary point of $i$th hole.}\label{hole_obstacle}
	\end{figure}

		\begin{algorithm}\label{HDPO}
		\caption{HDPO(S, W, O)}
		\label{Forward}
		\hspace*{\algorithmicindent} \textbf{Input:} An environment $W$ containing a set of sensors $S$ and a set of obstacles $O= \{o_i, \ldots, o_l\}$.\\
		\hspace*{\algorithmicindent} \textbf{Output:} $H=\{H_1, \ldots, H_{|H|}\}$ the set of holes in $W$.
		\begin{algorithmic}[1]\baselineskip=14pt\relax
			\STATE $H^{'}\leftarrow HDAO(S, W).$

			\FOR {each obstacle $o_i \in O$}
			\STATE Insert $o_i$ in the environment incrementally and update the set of holes $H^{'}$.
			\ENDFOR
			\STATE Report $H$.
			
		\end{algorithmic}
	\end{algorithm}
	
	\begin{theorem}[\cite{o1998computational}] \label{Theorem2}
		The intersection, union, or difference of two polygons with a total of $\nu$ vertices, whose edges intersect in $w$ points, can be constructed in $O(\nu \cdot \log \nu + w)$ time and $O(\nu)$ space.
	\end{theorem}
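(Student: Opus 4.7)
The plan is to prove the theorem by constructing the overlay of the two input polygons as a planar subdivision and then extracting the desired Boolean result from its faces. I would represent each polygon by a doubly connected edge list (DCEL), which records vertices, half-edges, and faces together with incidence pointers. Since each input polygon is a simple polygon, its DCEL has size $O(\nu)$ and can be built from the vertex sequence in linear time.

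The core step is computing all $w$ edge intersections between the two polygons. Here I would use a red-blue segment intersection sweep: color the edges of one polygon red and those of the other blue, and sweep a vertical line left to right while maintaining the status structure of currently active edges in a balanced search tree ordered by $y$-coordinate at the sweep line. Because no two segments of the same color cross, we only need to test adjacent red-blue pairs in the status, which is the key to shaving the extra $\log$ factor and achieving the $O(\nu \log \nu + w)$ bound rather than the general Bentley-Ottmann $O((\nu + w) \log \nu)$. Each intersection can then be reported and attached to the appropriate half-edges in amortized constant time by the Mairson-Stolfi / Chazelle-Edelsbrunner style red-blue scheme.

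With intersections in hand, I would splice the two DCELs into a single overlay DCEL: at every intersection point, split the two crossing half-edges into four, update twin and next/previous pointers locally, and then, in one final pass around each new vertex, re-sort the incident half-edges cyclically and relabel the faces. Because each intersection and each original vertex contributes only $O(1)$ local work to this stitching, the total cost is $O(\nu + w)$, and the storage is $O(\nu + w)$, which matches $O(\nu)$ whenever $w = O(\nu)$ and in any case is the unavoidable size of the output for the general statement. A flood-fill on the overlay classifies each face by which of the two originals contains it, and the faces belonging to the intersection, union, or difference can be read off and traced into polygonal boundaries in time proportional to their combinatorial complexity.

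The main obstacle is the tight intersection bound: showing that the red-blue sweep really runs in $O(\nu \log \nu + w)$ rather than $O((\nu + w) \log \nu)$ requires care with the status structure so that each reported crossing is charged only $O(1)$ amortized time, not $O(\log \nu)$. This is where the monochromatic non-crossing property of each polygon is essential, because it lets us argue that a newly created red-blue adjacency in the status is produced by a sweep event whose cost is already paid for by the insertion or deletion of an endpoint, of which there are only $O(\nu)$. Everything else—DCEL manipulation, face classification, boundary extraction—is routine bookkeeping on a planar graph of size $O(\nu + w)$.
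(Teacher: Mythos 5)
This statement is not proved in the paper at all: it is imported verbatim as a known result from O'Rourke's textbook (\cite{o1998computational}) and used as a black box in the running-time analysis of HDPO, so there is no internal proof to compare against. Your sketch is the standard argument behind that textbook theorem (overlay of two DCELs, red--blue segment intersection, face classification), and the overall architecture is sound. Two caveats. First, the one genuinely non-routine step is exactly the one you gloss over: the claim that the red--blue sweep runs in $O(\nu \log \nu + w)$ does not follow merely from ``only adjacent red--blue pairs are tested,'' since the ordinary Bentley--Ottmann sweep also tests only adjacent pairs and still pays $O(\log \nu)$ per intersection event for the event-queue and status updates; eliminating that factor requires the additional machinery of Mairson--Stolfi or Chazelle--Edelsbrunner (e.g., maintaining bundles of same-colored segments so that a run of consecutive crossings is processed in time proportional to its length), and your charging argument as stated does not yet deliver it. Second, your construction stores the full overlay, which is $\Theta(\nu + w)$, whereas the theorem claims $O(\nu)$ working space; since $w$ can be $\Theta(\nu^2)$, this is a real discrepancy that you acknowledge but do not resolve --- the cited result streams or traces the output rather than materializing the whole arrangement. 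Neither issue matters for the paper's use of the theorem (it only needs the time bound), but as a self-contained proof your proposal is incomplete at precisely these two points.
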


	Recall that some boundary edges of holes are the arcs of circles (i.e. sensors). Obviously, Theorem~\ref{Theorem2} is true for polygons with the arcs of circles.
	
	As in HDPO, the time complexity of Step 1 is $O(n \cdot \log^2 n)$. According to Theorem~\ref*{Theorem2}, the time complexity of Step 2 for hole $H_1^{'}$ is as follows:
	\begin{itemize}
		\item $O(n_1 \cdot z_1)$: updating $H_1^{'}$ after adding obstacle $o_1$ in the environment, where $z_1$ is the number of vertices of $o_1$, and $n_1$ is the number of vertices of $H_1^{'}$.
		\item $O(n_1 (z_1 + z_2))$: updating $H_1^{'}$ after adding obstacles $o_1$ and $o_2$ in the environment, where $z_2$ is the number of vertices of $o_2$.
		\item $O(n_1 (z_1 + \ldots + z_l))$: updating $H_1^{'}$ after adding obstacles $o_1, \ldots, o_l$  in the environment, where $z_l$ is the number of vertices of $o_l$.
	\end{itemize}
	So, we have:
	\begin{align*}
		O(n_1 \cdot z_1) + O(n_1 (z_1+ z_2)) + \ldots + O(n_1 (z_1+ z_2+ \ldots + z_l)) = O(n_1 \cdot l \cdot z).
	\end{align*}
	Therefore, the updating of $H_1^{'}$ takes $O(n_1 \cdot l \cdot z)$ time, where $l$ is the number of obstacles. So, the time complexity of this step for $H^{'} = \{H_1^{'} + \ldots, H_{|H^{'}|}^{'}  \}$ is:
	\begin{align*}
		O(n_1 \cdot l \cdot z) + \ldots + O(n_{|H^{'}|} \cdot l \cdot z) = O(n \cdot l \cdot z).
	\end{align*}
	Thus, the complexity of HDPO is bounded by $O(n (\log^2 n + l \cdot z))$.

\section{Hole Healing}\label{Hole Healing}
	In this section, we focus on the problem of hole healing, in which the goal is to maximize the coverage of a given environment by moving mobile sensors through the holes.
	More formally, in a given environment, there exist a set of mobile sensors $MS = \{ms_1, \ldots, ms_m\}$ with different sensing ranges and a set of holes $H =\{H_1, \ldots, H_{|H|}\}$. We aim to maximize the coverage space by moving the mobile sensors and recovering the holes. As this problem is NP-hard~\cite{wang2003portal}, in what follows, we propose a simple greedy approximation algorithm to cover the environment. We term this algorithm Greedy-HHP. In each step, Greedy-HHP moves one mobile sensor through the hole(s), which maximizes the coverage in that step. After moving each sensor, it updates the area of holes and iterates this procedure for the rest of the sensors.
	
	\begin{algorithm}\label{HHP}
		\caption{Greedy-HHP}
		\label{Forward}
		\hspace*{\algorithmicindent} \textbf{Input:} An environment $W$ containing a set of sensors $S$ and a set of holes $H =\{H_1, \ldots, H_{|H|}\}$.\\
		\hspace*{\algorithmicindent} \textbf{Output:} New location of the mobile sensors.
		\begin{algorithmic}[1]\baselineskip=14pt\relax
			\FOR {$i=1$ to $m$}
			\STATE Move the mobile sensor $ms_i$ through the holes that cause maximum coverage in the environment.
			\STATE Update the area of the holes.
			\ENDFOR
			\STATE Report the new location of the mobile sensors.
			
		\end{algorithmic}
	\end{algorithm}
	
	\begin{theorem}
		The approximation ratio of Greedy-HHP is $1/2$.
	\end{theorem}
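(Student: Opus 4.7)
The plan is to cast hole healing as maximization of a monotone submodular coverage function subject to a partition matroid constraint, and then run the standard Nemhauser--Wolsey--Fisher-style greedy-versus-optimum comparison, which yields the $\tfrac{1}{2}$ guarantee.

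First, I would set up the objective cleanly. For each mobile sensor $ms_i$, let $V_i$ denote the set of feasible placements in the environment, writing an element as a pair $(i,p)$ so that $V = \bigcup_{i=1}^{m} V_i$ is a disjoint union. Define $f : 2^{V} \to \RR_{\ge 0}$ by letting $f(X)$ be the area of the RoI covered by the static sensors together with every mobile sensor placed according to $X$. Monotonicity is immediate, and submodularity is pointwise: any point $q$ that becomes newly covered when a placement $(i,p)$ is added on top of a set $B$ is also newly covered when $(i,p)$ is added on top of any subset $A \subseteq B$, since ``newly covered'' for the larger set is a more restrictive condition. The feasibility constraint ``one placement for each mobile sensor'' is exactly the partition matroid induced by $V_1,\ldots,V_m$, so Greedy-HHP is the natural matroid-greedy for $f$.

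For the analysis, let $g_i \in V_i$ be the placement selected for $ms_i$ by Greedy-HHP, let $o_i \in V_i$ be an optimal placement, and set $G_i = \{g_1,\ldots,g_i\}$, $G = G_m$, $O = \{o_1,\ldots,o_m\}$, $\Delta_i = f(G_i) - f(G_{i-1})$. Because $o_i$ is a feasible candidate at iteration $i$, the greedy rule gives $\Delta_i \ge f(G_{i-1} \cup \{o_i\}) - f(G_{i-1})$, and submodularity (with $G_{i-1} \subseteq G$) upgrades this to
\begin{equation*}
\Delta_i \;\ge\; f(G \cup \{o_i\}) - f(G).
\end{equation*}
Summing over $i$ and applying submodularity once more in the telescoping form $\sum_i [f(G \cup \{o_i\}) - f(G)] \ge f(G \cup O) - f(G)$ yields
\begin{equation*}
f(G) \;=\; \sum_{i=1}^{m} \Delta_i \;\ge\; f(G \cup O) - f(G),
\end{equation*}
so that $f(G \cup O) \le 2\,f(G)$. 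Monotonicity finishes the argument: $f(O) \le f(G \cup O) \le 2 f(G)$, i.e. $f(G) \ge f(O)/2$.

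The main obstacle is the modelling step rather than the algebra. One must justify submodularity of the disk-union coverage function (standard for unions of disks clipped to a polygonal RoI, but worth saying in one line) and argue that the per-iteration maximum selected by Greedy-HHP is actually attained even though $V_i$ is a continuum; this holds because the marginal-coverage function of a disk centre of fixed radius is continuous and $V_i$ can be taken compact, so a maximiser exists. Once these two points are dispatched, the three inequalities above immediately deliver the $\tfrac{1}{2}$ bound.
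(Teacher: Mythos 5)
Your proof is correct, but it takes a genuinely different route from the paper's. The paper argues locally and informally: for each sensor $ms_i$ it lets $A$ be the area greedy covers with $ms_i$, $B$ the area the optimum covers with $ms_i$, and $C$ the area greedy covers with the sensors $ms_1^{'},\ldots,ms_k^{'}$ that the optimum sends to the same hole; it then asserts $B \le A + C$ and manipulates the ratio $\frac{A+C}{B+A}$ to conclude the bound. You instead cast the problem as maximizing a monotone submodular coverage function over a partition matroid (one placement per mobile sensor) and run the standard Fisher--Nemhauser--Wolsey comparison $\Delta_i \ge f(G\cup\{o_i\}) - f(G)$, summing to $f(G\cup O) \le 2f(G)$. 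Your route is more general and considerably more rigorous: it works for any processing order of the sensors and any compact feasible placement sets, it isolates exactly which properties of area coverage are used (monotonicity and submodularity), and it correctly accounts for the interaction between different sensors' contributions, which the paper's per-sensor charging glosses over --- its final step $\frac{A+C}{(A+C)+A} > \frac{1}{2}$ silently requires $C>0$, and the quantity being bounded is a per-sensor ratio whose relation to the global approximation ratio is never established. The paper's argument is shorter and avoids submodularity machinery, but as written it is hard to verify; yours is the argument the theorem actually needs. Two small points on your write-up: the telescoping sum gives $\sum_i \Delta_i = f(G) - f(\emptyset)$ rather than $f(G)$, since the static sensors already cover a positive area --- but as $f(\emptyset)\ge 0$ this only strengthens your chain of inequalities --- and you should state explicitly that the greedy and the optimum draw placements from the same sets $V_i$, since that is what makes $o_i$ a feasible candidate at iteration $i$.
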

	\begin{proof}
		According to Greedy-HHP, in $i$th step, sensor $ms_i$ covers the hole(s), which causes the maximum coverage in that step. So, the area is covered by $ms_i$ is at least as much as the area covered by moving some sensors $ms_1^{'}, \ldots ms_k^{'}$ through the same hole by an optimal algorithm for some $k$. Precisely, let $A$ be the amount of area covered by $ms_i$ in the Greedy-HHP algorithm and $B$ be the amount of area covered by $ms_i$ in the optimal algorithm. Also, let $C$ be the total covered area by $ms_1^{'}, \ldots, ms_k^{'}$ in Greedy-HHP. Clearly $B \le A + C$. Therefore, the ratio of Greedy-HHP to the optimal algorithm is $\alpha= \frac{A+C}{B+A}$ for the mentioned sensors, and considering the greedy manner of Greedy-HHP $ \frac{A+C}{(A+C)+A}> \frac{1}{2}$. Since this inequality holds for any set of sensors, the proof is complete.
	\end{proof}
\section{Experiments} \label{sec : numerical tests}
    In this section, we report on an experimental study and evaluate the performance of our proposed algorithms. We carry out the two-stage simulation. In the first stage, we compare HDAO with that of Ma et al.~\cite{ma2011computational}, in terms of boundary holes detection. In the second stage, we point out the time complexity, accuracy and completeness of HDAO, HDPO, and the Greedy-HHP in terms of hole detection and healing.
	We simulated the algorithms using MATLAB 2015 on a PC with Intel Core i7-3770 CPU at 3.40 GHz, 8 GB RAM, Windows 8 64-bit OS for a set of mobile and static sensors with different sensing ranges. The environment consists of a set of obstacles, and sensors are deployed randomly over an RoI. The sensing range of the sensors varies from $5m$ to $20m$.

\subsection{Comparison with HDAO's Results}
	Most of the previous algorithms have been proposed for the case that the sensing range of the sensors is identical and there is no obstacle in the environment. So, in this section, to be able to compare and report the performance of our proposed algorithm, we simulate the HDAO algorithm in the absence of the obstacles where the sensing range of the sensors is $5m$. Sensors are deployed randomly over an RoI of $200m \times 200m$. We vary the number of sensors from 100 to 300 and compare HDAO with the proposed algorithm by Ma et al.~\cite{ma2011computational} in terms of running time and accuracy in detecting the boundary sensors of the holes. The results are reported in Table~\ref{table_ma}. We can see that HDAO reports the boundary sensors of the hole exactly, while the proposed algorithm by Ma et al.~\cite{ma2011computational} has some errors as mentioned in their study. Also, by increasing the number of sensors, the running time of algorithms increases. Non-surprisingly, HDAO is faster than that of Ma et al.~\cite{ma2011computational} and increasing the number of sensors does not affect the efficiency of HDAO.
\begin{table}[H]
 \centering
 \begin{tabular}{ | c | c| c | c | } 
    \hline
    Algorithm & Number of sensors & Number of boundary sensors of the holes &  Running time(s)\\ 
    \hline
     & 100 & 93 & 174\\
     HDAO & 200 & 187 & 705\\ 
     & 300 & 279 & 1364\\
    \hline
     & 100 & 79 & 300 \\ 
    Ma et al. & 200 & 160 & 992 \\
     & 300 & 242 & 2322 \\ 
    \hline
\end{tabular}
\caption{Boundary sensors detection by HDAO and the algorithm in Ma et al.~\cite{ma2011computational}.}
\label{table_ma}
\end{table}

	\subsection{Results on HDAO, HDPO and Greedy-HHP}
	This stage contains two parts. In the first part, we vary the sensing range of the sensors and obtain the hole boundary edges within the RoI in the presence and absence of the obstacles. The purpose of this part is an evaluation of HDAO and HDPO in terms of hole detection. In the next part, we evaluate the performance of Greedy-HHP. We analyze the healing provided by Greedy-HHP for a different number of sensors and holes.

\subsubsection{HDAO and HDPO}
	We simulate HDAO and HDPO over an RoI of $100m \times 80m$. According to Figure~\ref{hole-without-obstacle}, after deploying 23 sensors in the environment, HDAO obtains the holes and reports the boundary edges at a runtime of 5.68 seconds. In case that the environment consists of 34 sensors and three obstacles (Figure~\ref{hole-with-obstacle}), HDPO calculates the intersection points between obstacles and sensors and obtains the holes at a runtime of 24.29 seconds. The simulation results show that HDAO and HDPO detect the holes truly in the low and high density of the sensors, and the type of created holes (open and close) do not have effects on the algorithms' performance. Also, we observe that the shape of the obstacles has no impact on the algorithm's precision (see Figure~\ref{hole-with-obstacle}). 
	\begin{figure}[H]
	\centering
	\begin{subfigure}[b]{.5\linewidth}
		\centering
		\includegraphics[scale=.28]{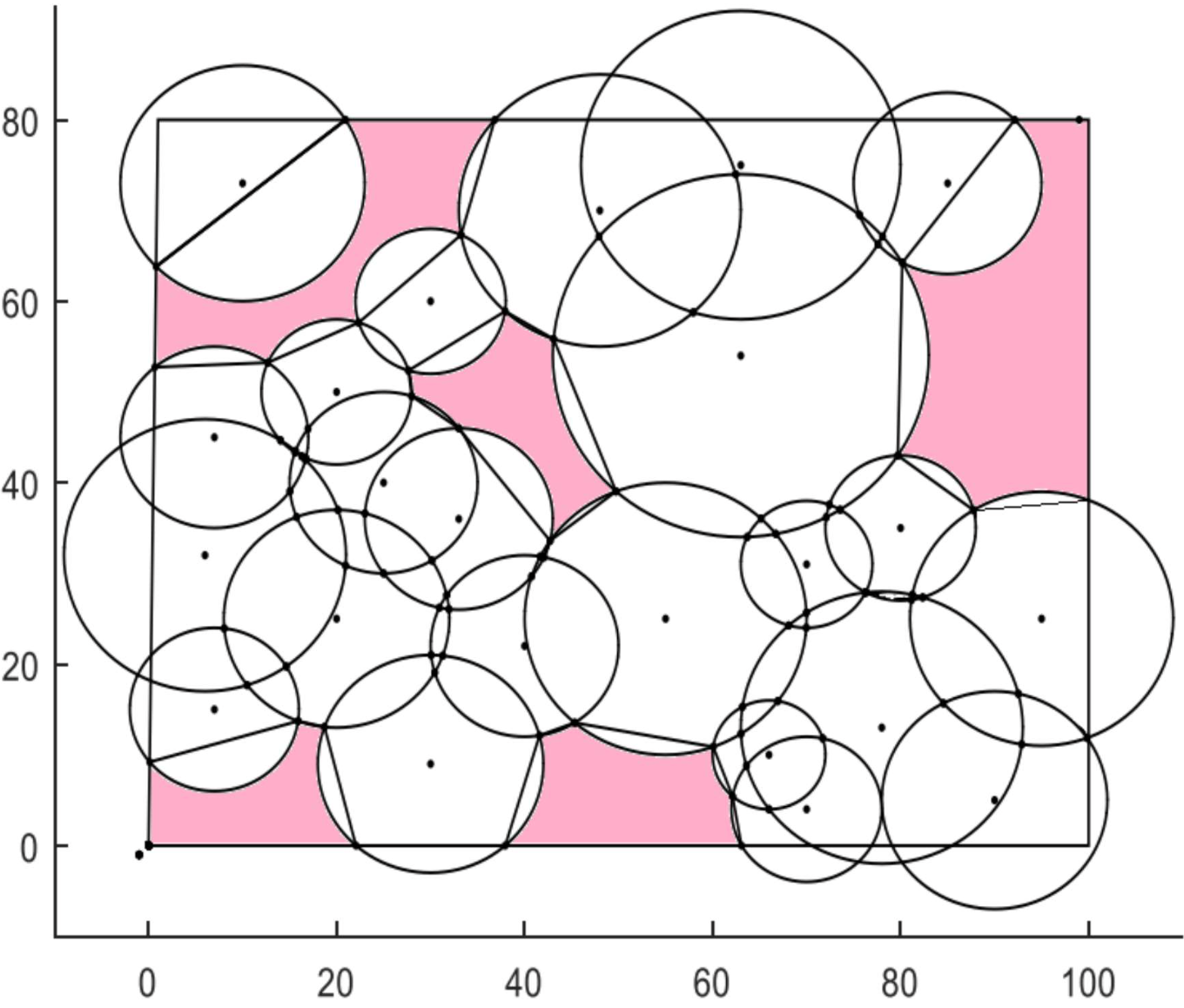}
		\caption{}
			\label{hole-without-obstacle} 
	\end{subfigure}
	\begin{subfigure}[b]{.45\linewidth}               
		\centering
		\includegraphics[scale=.4]{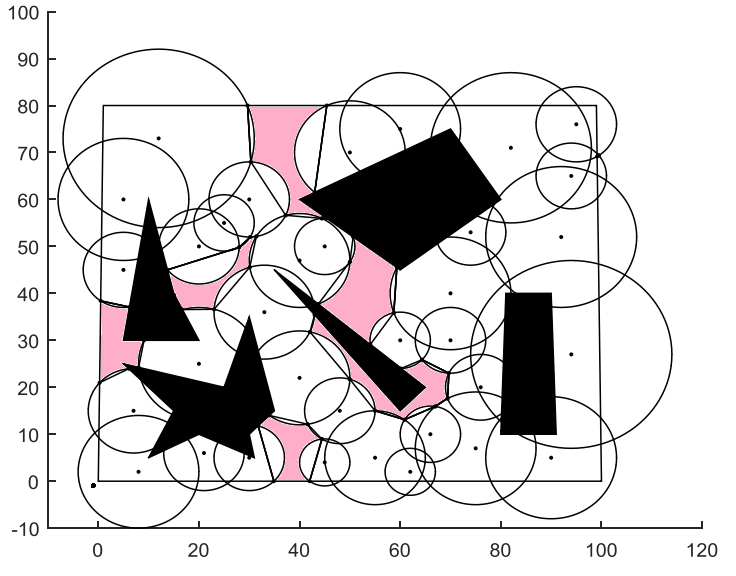}
		\caption{}
	\label{hole-with-obstacle} 
	\end{subfigure}
	\caption{(a) Hole detection in an environment without obstacles, gray areas are the holes detected by HDAO. (b) Hole detection in an environment with obstacles, black areas are the obstacles.} 
	\label{hole-with-without-obstacle} 
\end{figure}
	
	The main advantages of our proposed algorithms in comparison with the previous ones are completeness, accuracy and low time complexity. In case that the environment contains no obstacle and the sensing range of the sensors is different, the time complexity of our algorithm is $O(n \cdot \log^2 n)$, while in the case that the sensing range of the sensors is identical, the time complexity of the previous algorithms in~\cite{kang2013detection,sahoo2010vector} and~\cite{li2015coverage} are $O(n^3)$ and $O(n^2)$, respectively. Further, the proposed algorithms in~\cite{kang2013detection,sahoo2010vector}, after obtaining boundary points of the holes, do not detect the hole boundary edges precisely. 
	
\subsubsection{Greedy-HHP }
	In this subsection, we simulate Greedy-HHP for a set of sensors, including both static and mobile sensors.
	We analyze the behaviour of the algorithm before and after moving the mobile sensors. In case that the environment contains no obstacle, after running the algorithm, we can see that mobile sensors move toward the holes and enhance coverage in the environment. Some results for hole healing by considering the different numbers and the different sizes of the sensors and holes are depicted in Figure~\ref{healing-without-obstacle}.
	\begin{figure}[H]
		\centering
		\begin{subfigure}[b]{.5\linewidth}
			\centering
			\includegraphics[scale=.5]{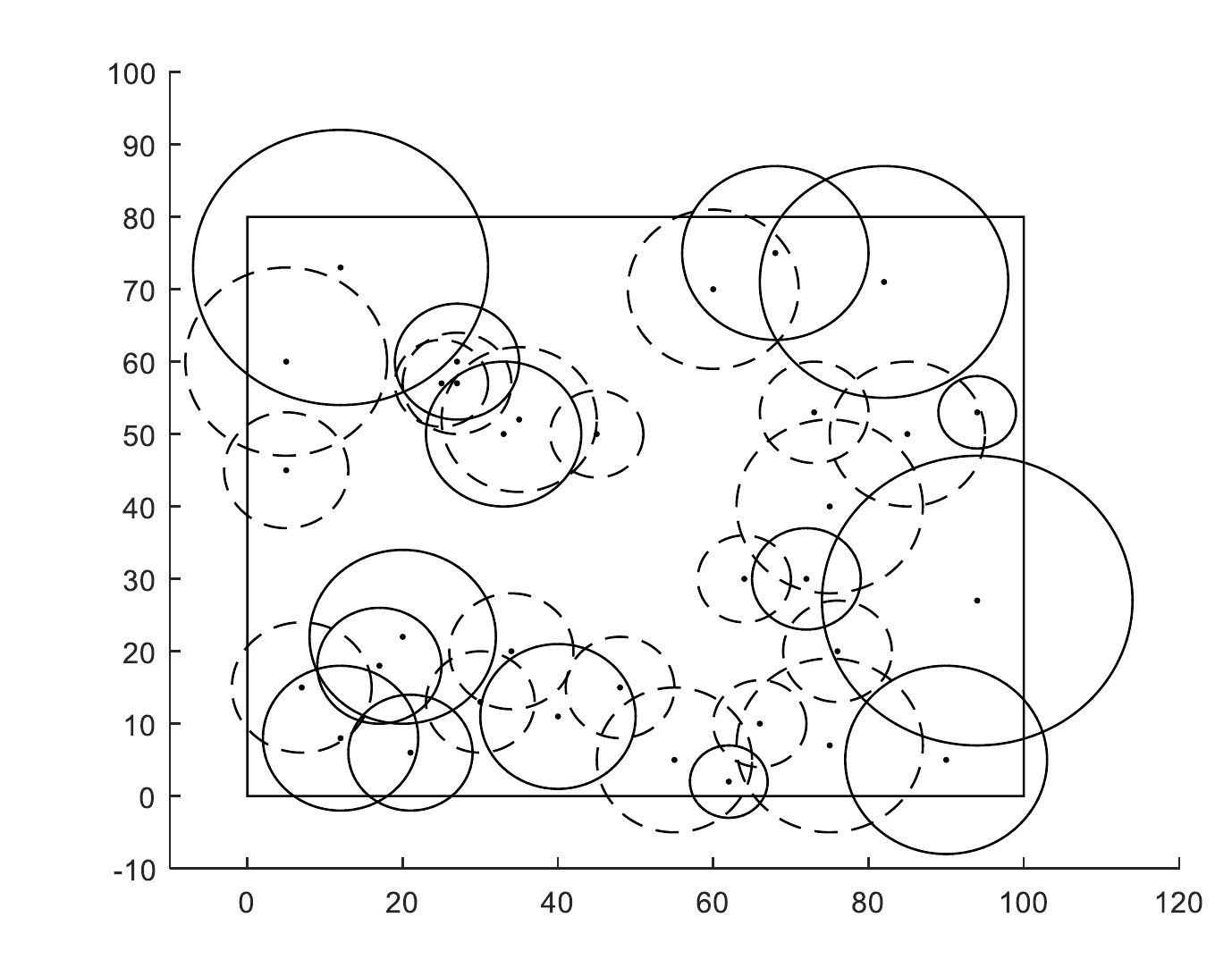}
		\end{subfigure}
		\begin{subfigure}[b]{.45\linewidth}               
			\centering
			\includegraphics[scale=.5]{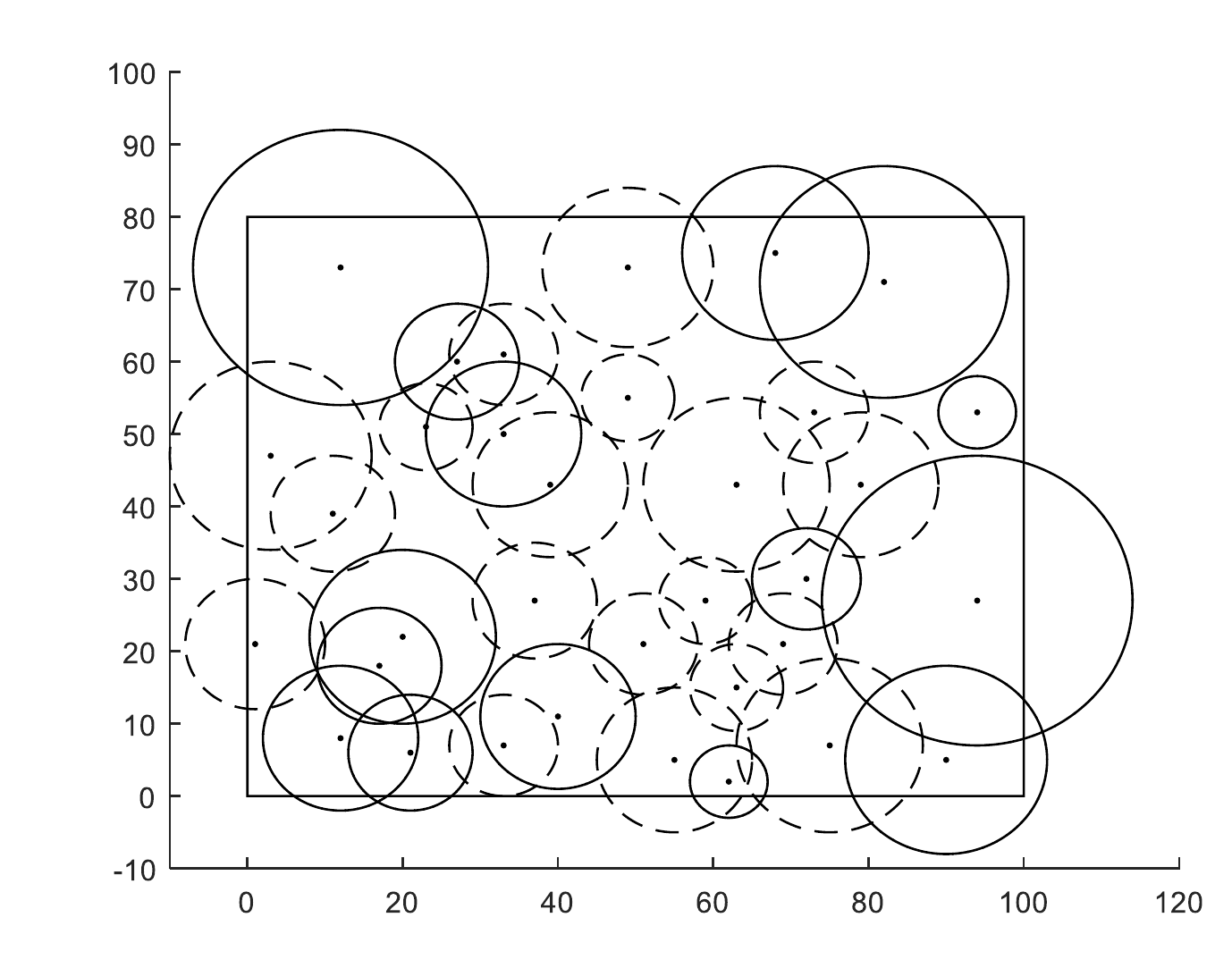}
		\end{subfigure}
			\begin{subfigure}[b]{.5\linewidth}
			\centering
			\includegraphics[scale=.5]{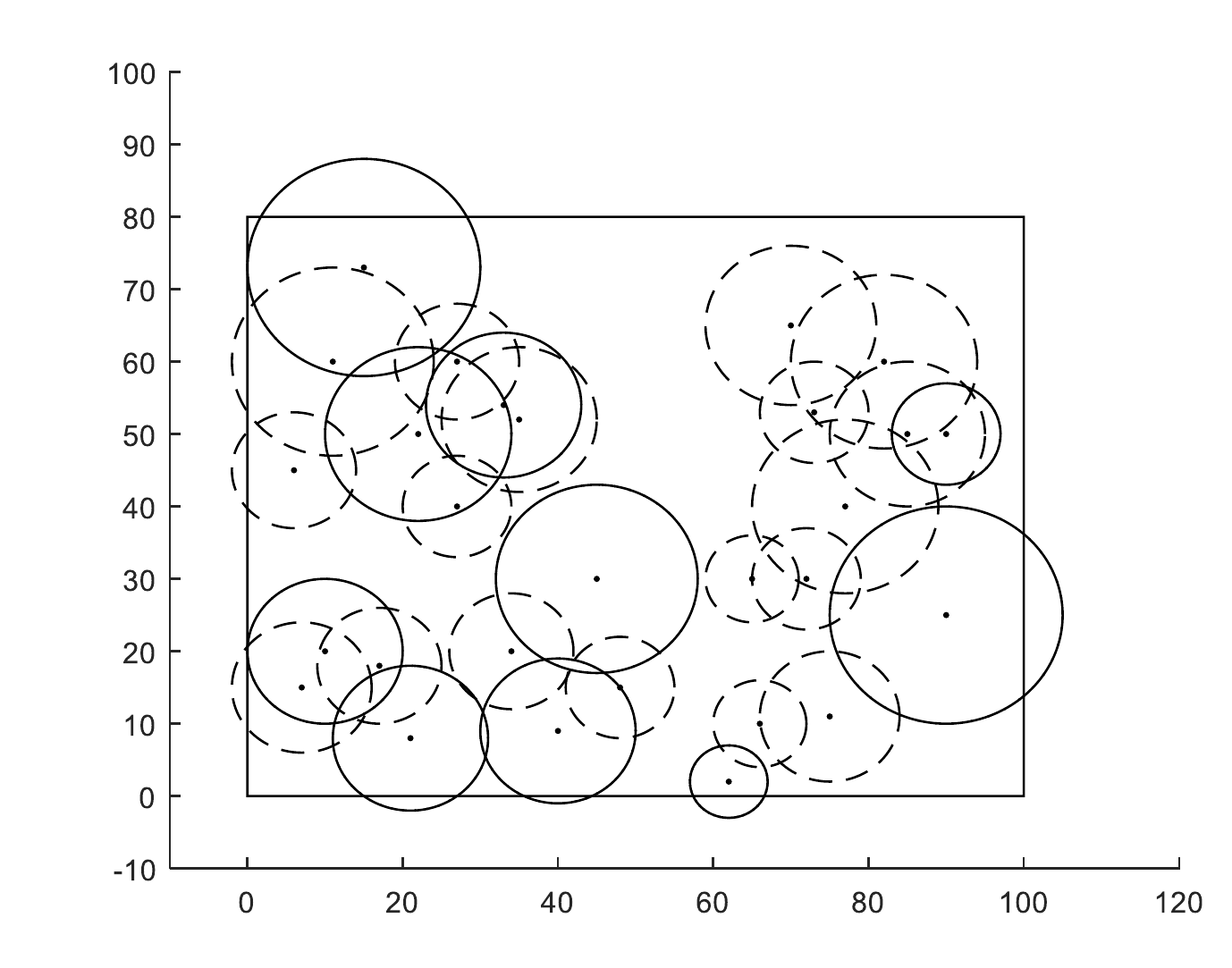}
			\caption{Before running Greedy-HHP.}
		\end{subfigure}
		\begin{subfigure}[b]{.45\linewidth}               
			\centering
			\includegraphics[scale=.5]{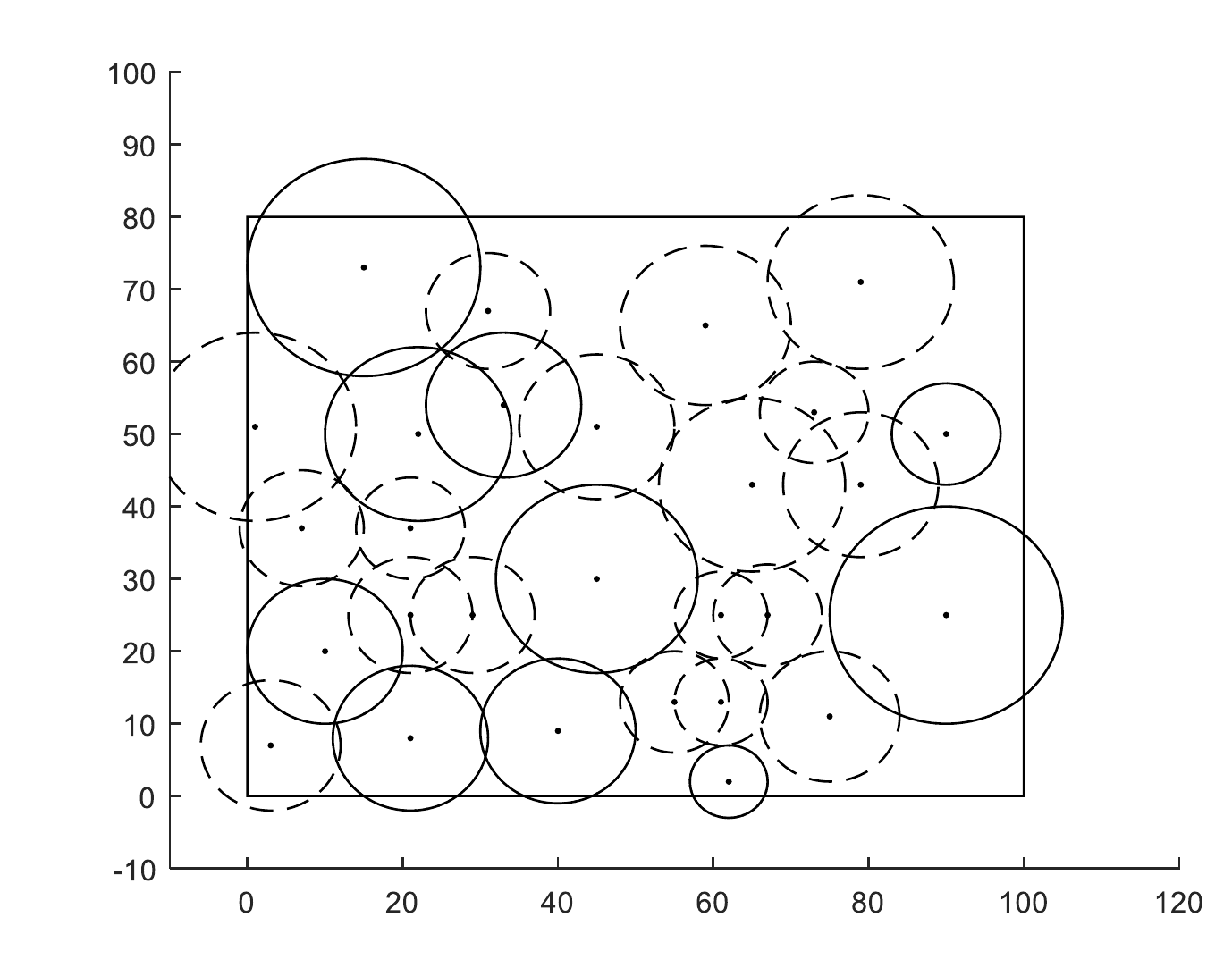}
			\caption{After running Greedy-HHP.}
		\end{subfigure}
		\begin{subfigure}[b]{.45\linewidth}               
			\centering
			\includegraphics[scale=.5]{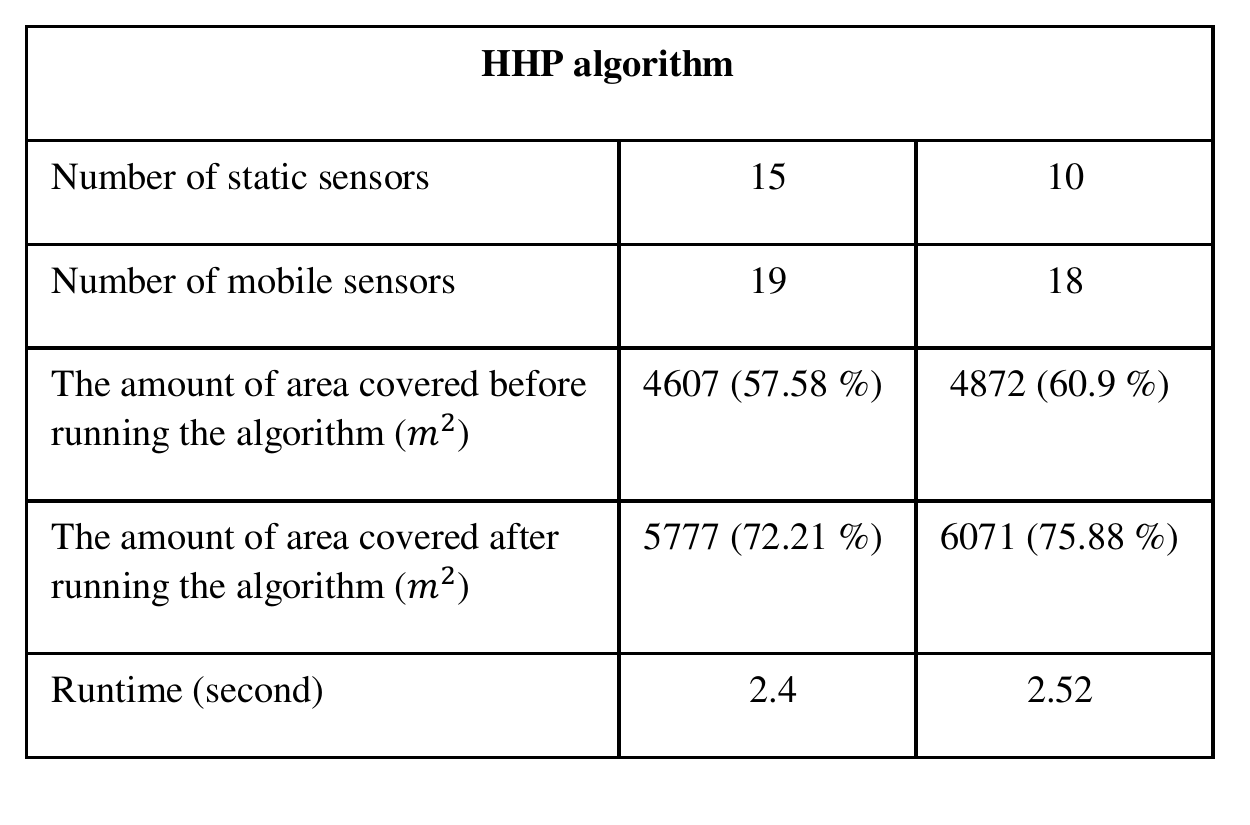}
   		\end{subfigure}
	
		\caption{Hole healing in an environment without obstacles, the mobile sensors are shown by dashed circles.} 
		\label{healing-without-obstacle} 
	\end{figure}
	In order to further explore the efficiency of Greedy-HHP, we report the percentage of hole recovered area before and after applying Greedy-HHP over the RoI in the presence of the obstacles (see Figure~\ref{healing-with-obstacle}).
	It is observed that the percentage of hole recovery is increased by moving mobile sensors. Furthermore, when the sensing range of the static sensors is small, mobile sensors are able to cover maximum amount of holes.
	
	The simulation results show that the total coverage in the environment is increased after running the algorithms, and the overlapping area is decreased after moving the mobile sensors significantly. Further, by increasing the number of mobile sensors and their sensing range, the coverage area's percentage is improved.
\begin{figure}[H]
	\centering
	\begin{subfigure}[b]{.5\linewidth}
		\centering
		\includegraphics[scale=.5]{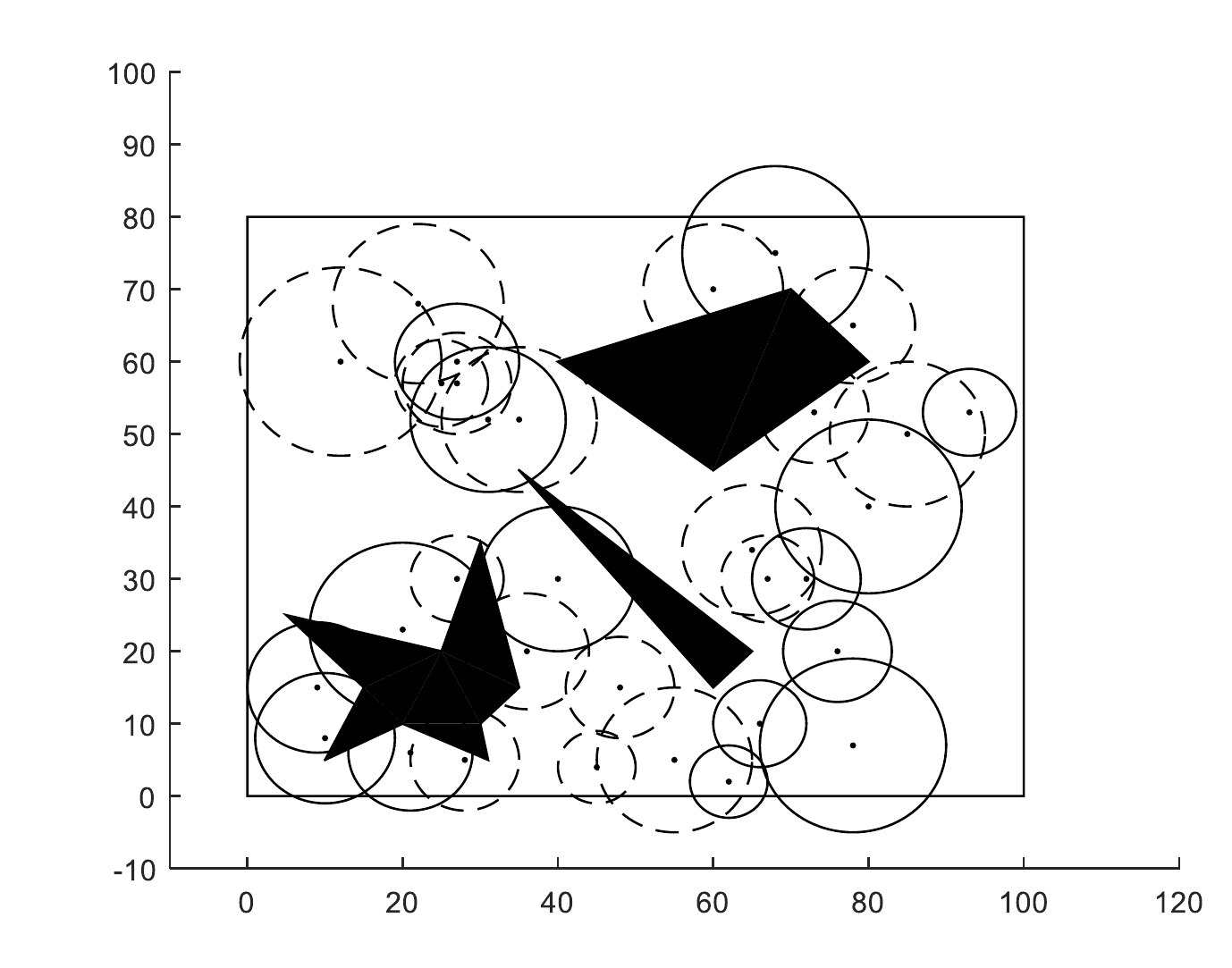}
	\end{subfigure}
	\begin{subfigure}[b]{.45\linewidth}               
		\centering
		\includegraphics[scale=.5]{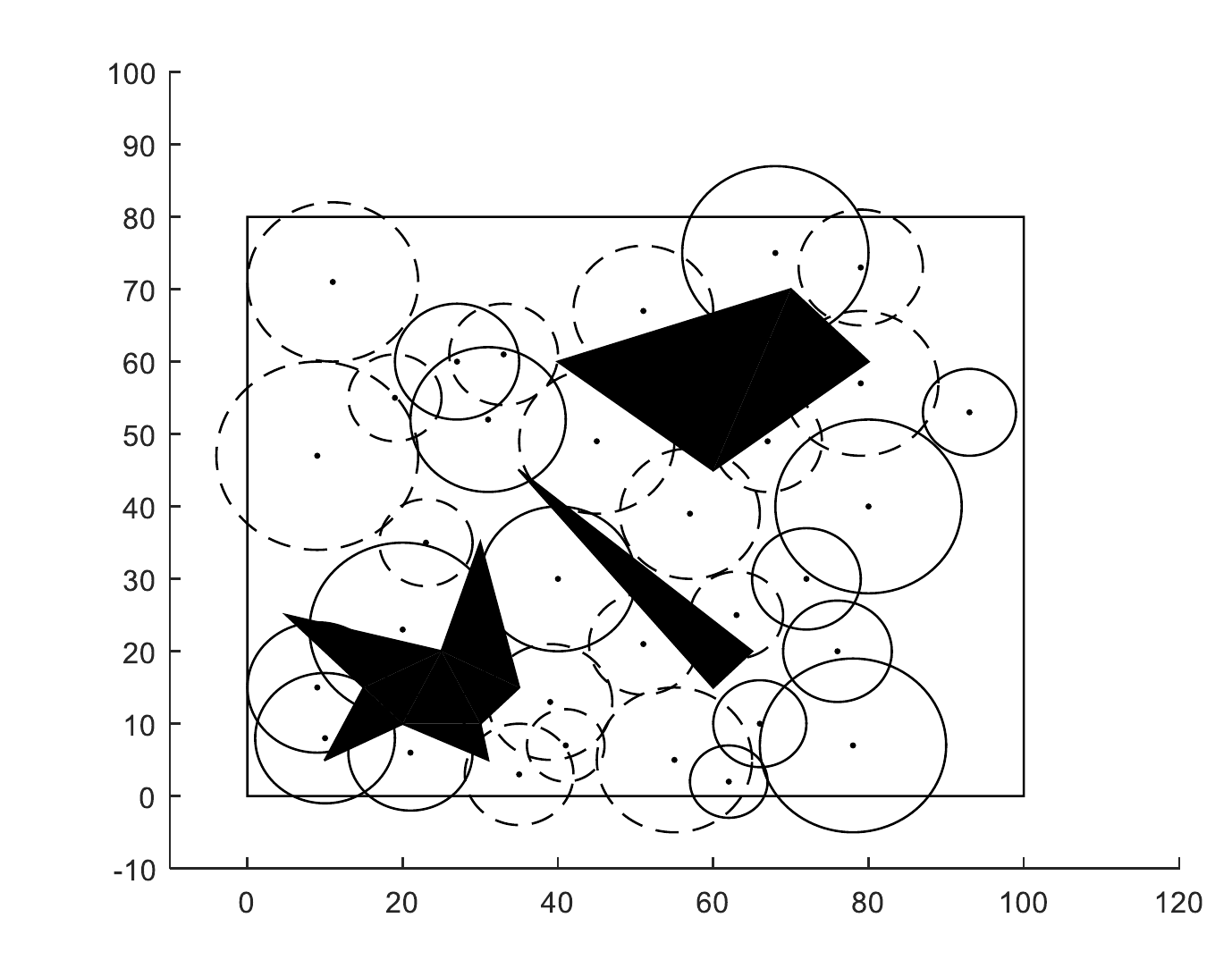}
	\end{subfigure}
	\begin{subfigure}[b]{.5\linewidth}
		\centering
		\includegraphics[scale=.5]{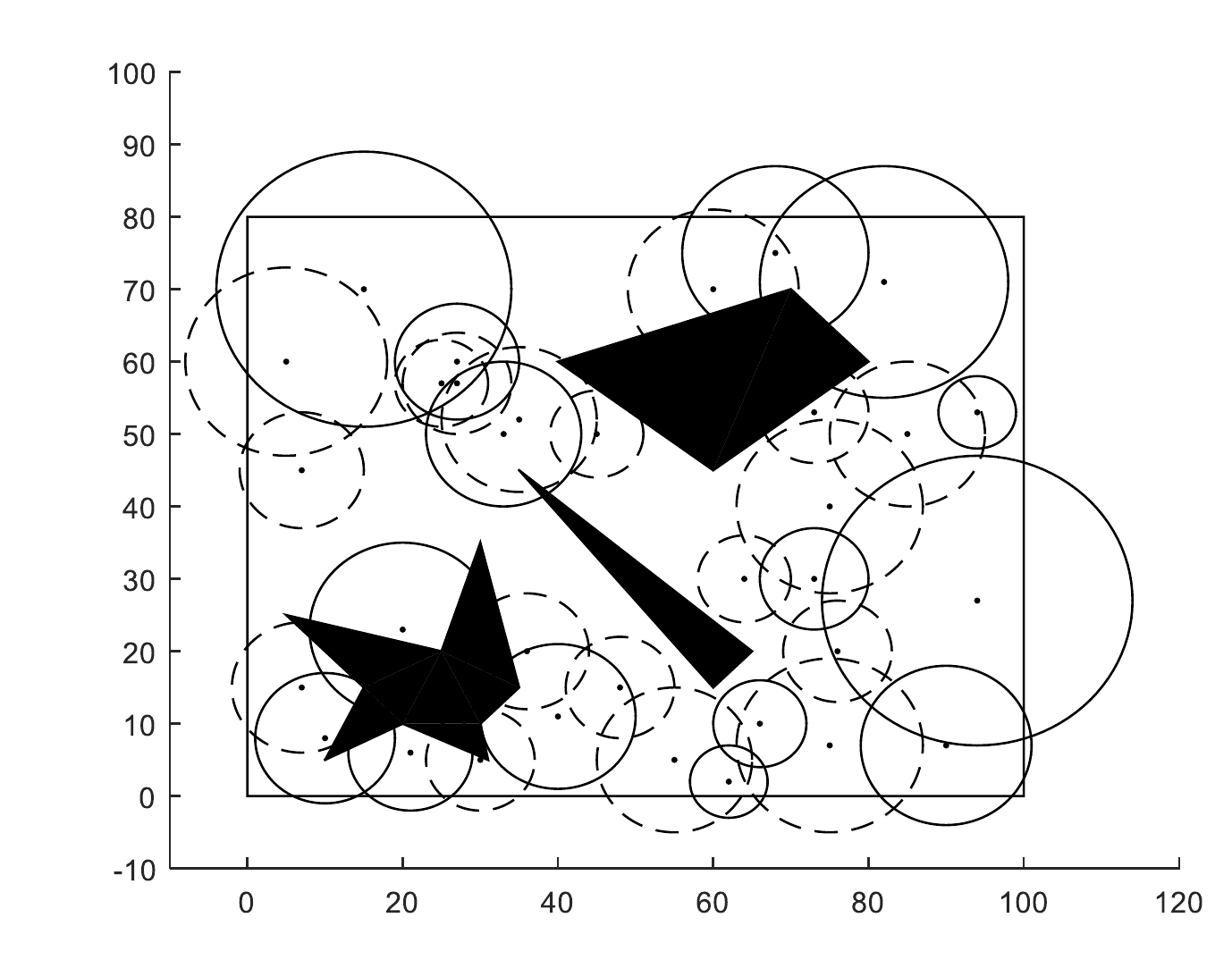}
		\caption{Before running Greedy-HHP.}
	\end{subfigure}
	\begin{subfigure}[b]{.45\linewidth}               
		\centering
		\includegraphics[scale=.5]{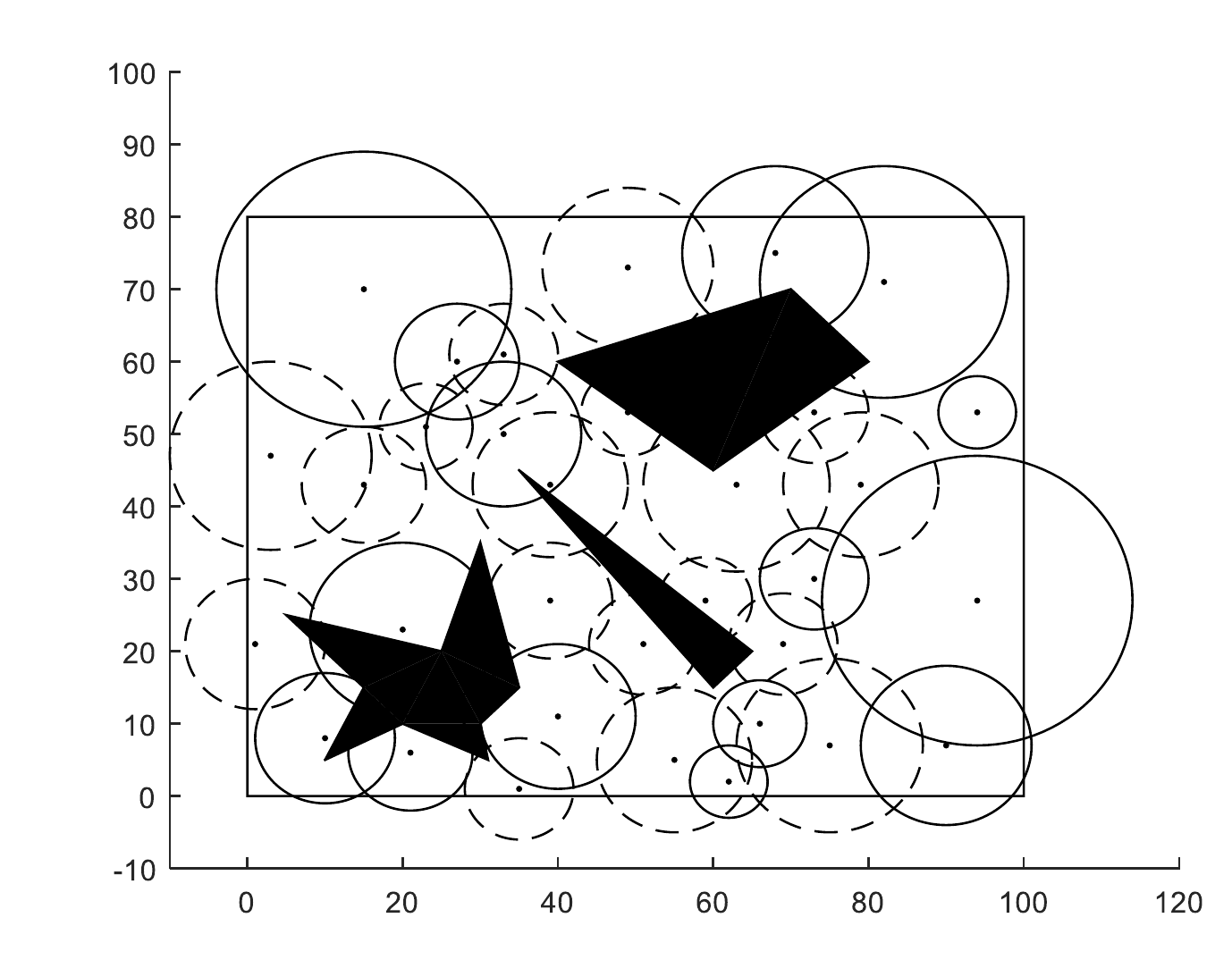}
		\caption{After running Greedy-HHP.}
	\end{subfigure}
	\begin{subfigure}[b]{.45\linewidth}               
		\centering
		\includegraphics[scale=.5]{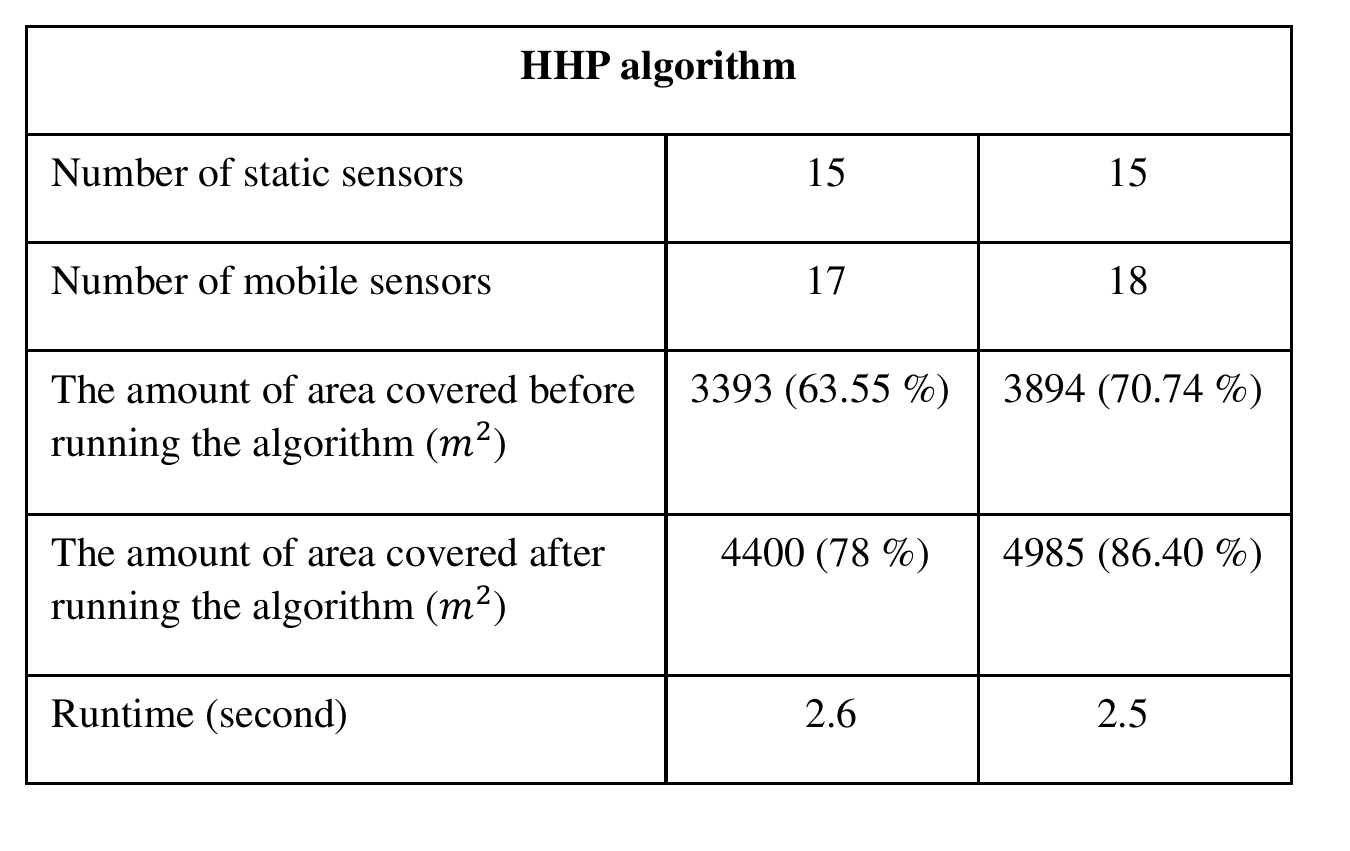}
	\end{subfigure}
	
	\caption{Hole healing in an environment with obstacles, the mobile sensors are shown by dashed circles.} 
	\label{healing-with-obstacle} 
\end{figure}

\section{Conclusion}
In this paper, we investigated the problems of hole detection and healing in an environment of hybrid sensors with different sensing ranges. We presented a polynomial-time algorithm to detect all holes precisely in the presence of obstacles in the environment. 
We then proposed a $1/2$-approximation algorithm to maximize the coverage area by moving the mobile sensors. The simulation results demonstrated that our proposed algorithms cover the holes in the environment efficiently. 
Designing algorithms that run faster and provide a better approximation factor would be a worthwhile contribution.



\bibliography{./references}
\end{document}